\tikzset{fontscale/.style = {font=\relsize{#1}}}
\newcommand{\figscale}{0.20}
\newcommand{\figcaption}[1]{\vspace*{-3mm}\caption{#1}\vspace*{-2mm}}
\newcommand{\tabcaption}[1]{\caption{#1}\vspace*{-3mm}}
\newcommand*\phantomrel[1]{\mathrel{\phantom{#1}}}
\renewcommand{\comment}[1]{}
\newtheorem{theorem}{Theorem}
\newtheorem{definition}{Definition}
\newtheorem{lemma}{Lemma}
\newtheorem{observation}{Observation}
\newcommand{\ab}[1]{{\color{red}{AB: #1}}}
\newcommand{\gaur}[1]{{\color{magenta}{GG: \it{#1}}}}
\newcommand{\yago}{YAGO2\xspace}
\newcommand{\gmark}{gMark\xspace}
\newcommand{\name}{\textsc{HaPPI}\xspace}
\newcommand{\provsql}{ProvSQL\xspace}
\newcommand{\huka}{HUKA\xspace}
\newcommand{\tripleprov}{TripleProv\xspace}
\newcommand{\pw}{\textsf{PossWorld}\xspace}
\newcommand{\ctd}{\textsf{PS-KC}\xspace}
\setlist{nosep,leftmargin=*}
\begin{document}

\title{Computing and Maintaining Provenance of Query Result Probabilities in Uncertain Knowledge Graphs}

\author{Garima Gaur}
\affiliation{%
  \institution{CSE, IIT Kanpur}
  \city{Kanpur }
  \country{India}}
\email{garimag@cse.iitk.ac.in}

\author{Abhishek Dang}
\affiliation{%
  \institution{CSE, IIT Kanpur}
  \city{Kanpur }
  \country{India}}
\email{ahdang@cse.iitk.ac.in}

\author{Arnab Bhattacharya}
\affiliation{%
  \institution{CSE, IIT Kanpur}
  \city{Kanpur }
  \country{India}}
\email{arnabb@cse.iitk.ac.in}

\author{Srikanta Bedathur}
\affiliation{%
  \institution{CSE, IIT Delhi}
  \city{New Delhi}
  \country{India}}
\email{srikanta@cse.iitd.ac.in}

\begin{abstract}
	Knowledge graphs (KG) that model the relationships between entities as
	labeled edges (or facts) in a graph are mostly constructed using a suite of
	automated extractors, thereby inherently leading to uncertainty in the
	extracted facts.  Modeling the uncertainty as probabilistic confidence
	scores results in a \emph{probabilistic knowledge graph}. Graph queries
	over such probabilistic KGs require answer computation along with the
	computation of those result probabilities, aka, probabilistic inference.
	We propose a system, \textsf{\name} (\textit{H}ow \textit{P}rovenance of
	\textit{P}robabilistic \textit{I}nference), to handle such query
	processing.  Complying with the standard provenance semiring model, we
	propose a novel \emph{commutative semiring} to \emph{symbolically compute}
	the probability of the result of a query. These provenance-polynomial-like
	\emph{symbolic expressions} encode fine-grained information about the
	probability computation process. We leverage this encoding to efficiently
	\emph{compute} as well as \emph{maintain} the probability of results as the
	underlying KG changes.  Focusing on a popular class of conjunctive basic
	graph pattern queries on the KG, we compare the performance of \name
	against a possible-world model of computation and a knowledge compilation
	tool over two large datasets. We also
	propose an \emph{adaptive} system that leverages the strengths of both \name
	and compilation based techniques.  Since existing systems for probabilistic
	databases mostly focus on query computation, they default to
	re-computation when facts in the KG are updated.  \name, on the other hand,
	does not just perform probabilistic inference and maintain their
	provenance, but also provides a mechanism to \emph{incrementally} maintain
	them as the KG changes.  We extend this maintainability as part of our
	proposed \emph{adaptive} system.
\end{abstract}

%\keywords{Probabilistic data, knowledge graph, probabilistic inference, provenance}

\maketitle

\section{Introduction}
\label{sec:intro}

Knowledge graphs (KGs) are central to many real-life systems such as search
engines, social networks, medical assistants, question answering, etc. Some of
these KGs are automatically built by employing a suite of knowledge extractors
and integrators.  Differences in various extraction approaches, inherent
ambiguities of the extraction process itself, and variations in the credibility
of data sources make the automatically extracted facts uncertain. The
uncertainty is encoded by assigning a \emph{confidence score} to each fact,
leading to \emph{probabilistic knowledge graphs} such as YAGO2~\cite{yago2},
NELL~\cite{NELL-aaai15}, ReVerb~\cite{reverb}, Probase~\cite{probase}, etc.

In addition, despite the phenomenal progress in information extraction
techniques, erroneous facts invariably creep into KGs. This, in turn, results
in query answers being erroneous.  For instance, the query for ``a list of
African comedians'' over the NELL KG includes ``Jimmy Fallon'' and ``Ellen
DeGeneres'' (as per NELL extraction round \#1082). Although both are comedians,
neither of them are from Africa. To determine the source of the error, or to
debug the KG, it is necessary to compute the \emph{provenance} of each result.

\begin{figure}[t]
\centering
  \centering
  \includegraphics[scale=0.18]{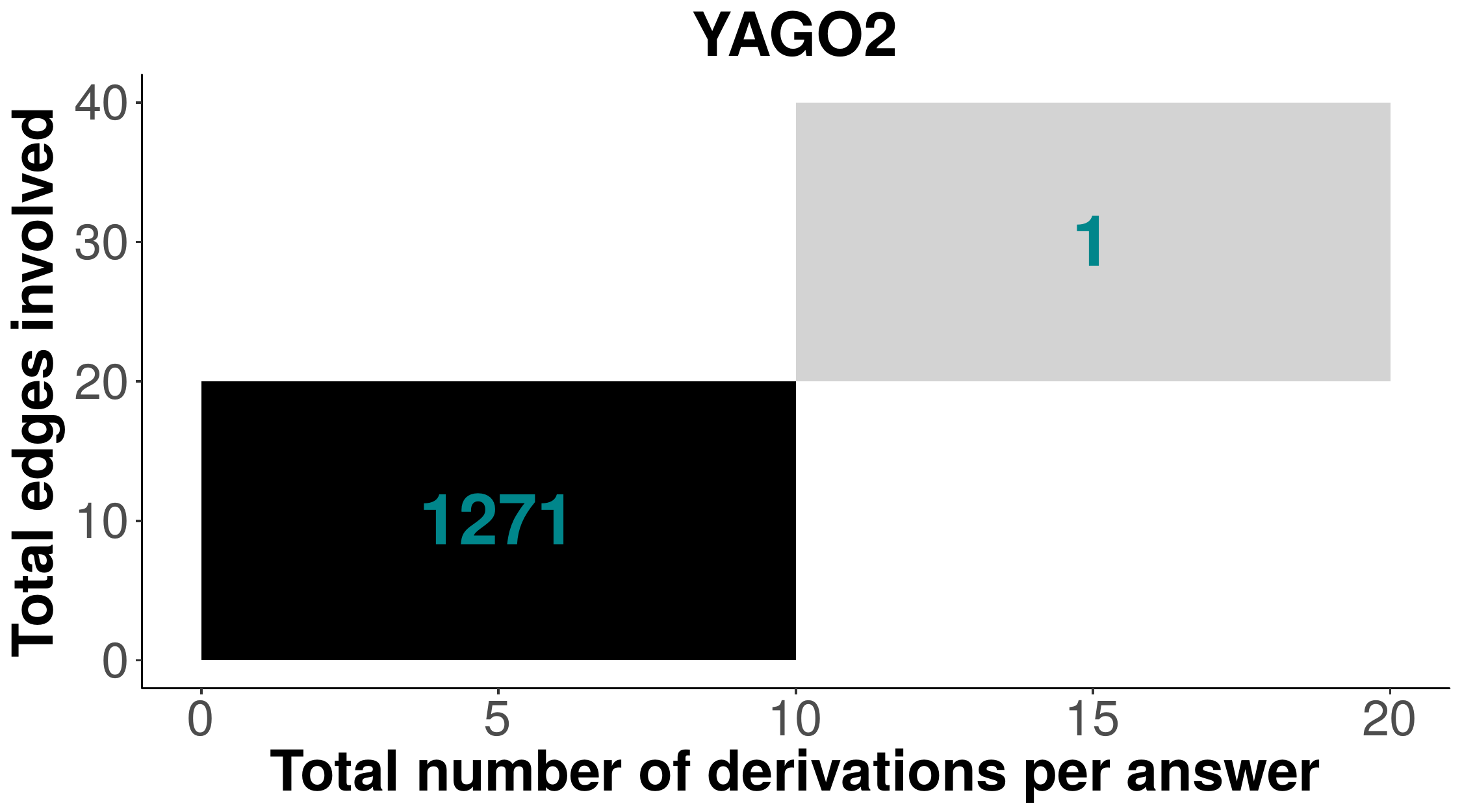}
	\hfill{}
  \includegraphics[scale=0.18]{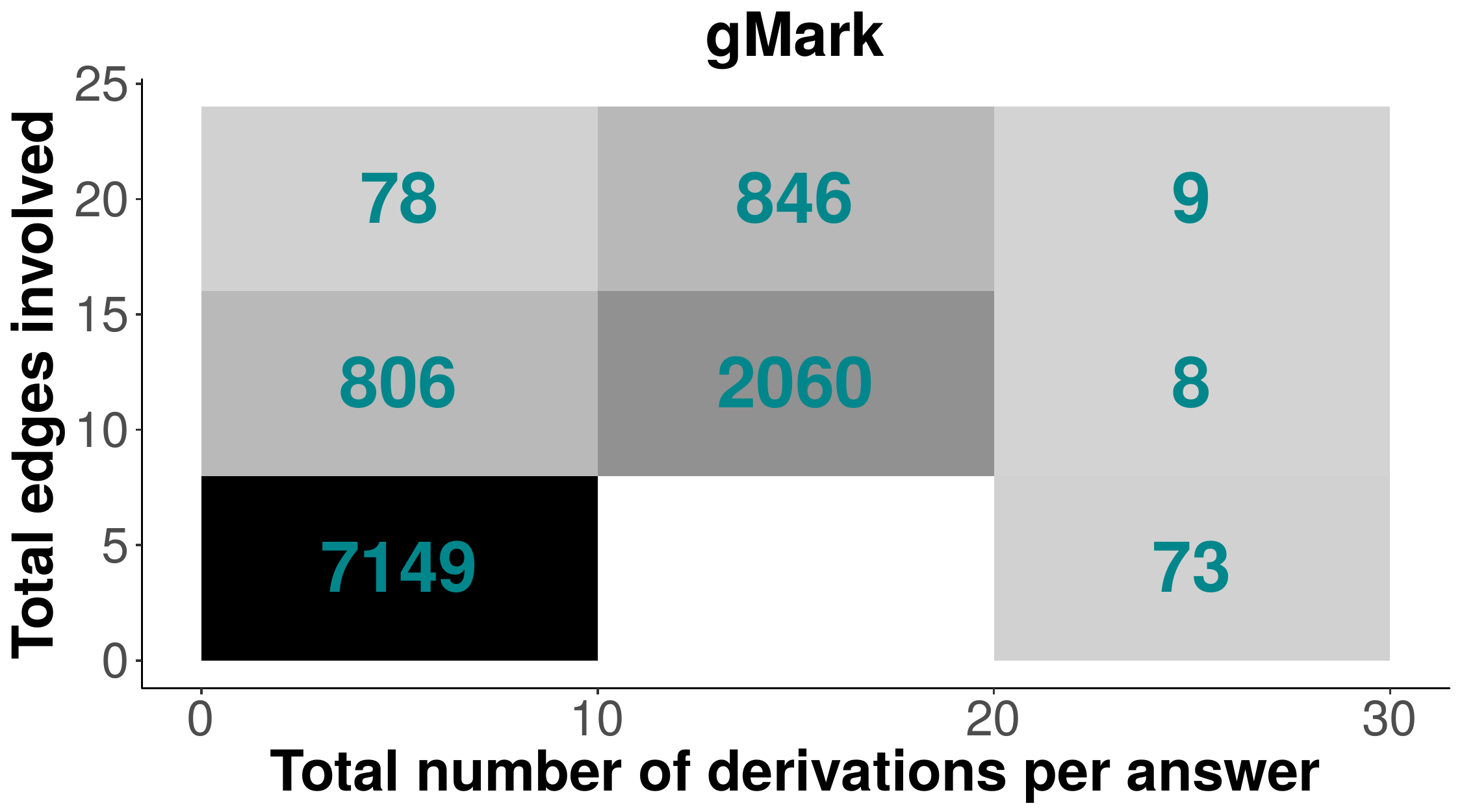}
	\figcaption{Distribution of \yago and \gmark query answers according to their total derivations and involved edges}
\label{fig:heat}
\end{figure}

Provenance represents the derivation process of an answer.  For probabilistic
data, typically, it is modeled as a Boolean formula in disjunctive normal form
(DNF), where each conjunct encodes a derivation.  This Boolean formula,
referred to as \emph{lineage}, is also used for probability computation of the
answer by counting all satisfying assignments -- equivalent to the model
counting problem which is known to be \#P-hard ~\cite{roth96}. The brute-force
way of counting satisfying assignments, known as possible world computation, is
to iterate over all possible assignments taking time exponential in the number
of the Boolean variables involved.  For provenance in probabilistic graphs, a
Boolean variable is associated with each edge.
A more nuanced approach to probability computation is the \emph{knowledge
compilation} technique that translates the lineage formula to a more tractable
Boolean circuit using SAT solvers.  Although it cannot guarantee scalability,
for large answers the use of compilation tools (e.g., C2D~\cite{c2d},
D4~\cite{d4}, dSharp~\cite{dsharp}) is known to be practical.
  
However, for answers with a small number of Boolean variables in their
derivation formula, the overhead of a SAT solver invocation results in a
considerably poor performance. Na\"ively using knowledge compilation tools may,
thus, fail to take advantage of small size of the computation problem. We
investigated the results of query workloads from \gmark~\cite{gmark} as well as
\yago~\cite{yago2} to understand the extent to which this behavior can affect
the overall performance. Fig.~\ref{fig:heat} summarizes the distribution of
answers with different number of derivations (along x-axis) and edge counts
(along y-axis), suitably bucketed for readability. We use color to indicate the
absolute count of answers in each region (darker the color, more the count),
and also print the raw count. It can be seen that most graph query answers are
concentrated in the area where the derivation and edge counts are
low---precisely where knowledge compilation is not the best method. In fact, in
this region, they are out-performed by even a possible-world computation that
employs brute-force evaluation.

In this paper, we specifically target this region where a large number of
answers are found, and present an algorithm for probability computations that
significantly speeds up the performance there. We implement this in a system
named \textbf{\name} (\underline{H}ow \underline{P}rovenance of
\underline{P}robabilistic \underline{I}nference). 
Thematically to the provenance semiring model~\cite{provenancesemirings}, we
also introduce a novel semiring which enables \name to symbolically compute the
answer probability. The proposed semiring facilitates computing \emph{how}
provenance of not just the query answer but also of its probability
computation. Unlike a Boolean formula that simply represents how edges interact
to generate an answer, we additionally capture the arithmetic involved to
compute the exact probability. This fine-grained provenance information allows
for efficient maintenance in \name.

\name outperforms knowledge compilation tools as well as possible world
computation for answers of the kind found in the highly populated bottom left
region of Fig.~\ref{fig:heat}.  Since knowledge compilation techniques work
best for answers with high derivation and edge counts, we also propose an
\emph{adaptive system} that uses \name for small answers and compilation
techniques for larger answers. We show that this adaptive system produces
sizeable gains in performance over either system used in isolation. Further,
since the adaptive system uses \name for a large number of answers, it inherits
the maintainability of \name.

The key \emph{contributions} of this paper are four-fold:  

\begin{enumerate}
     
     \item A novel theoretical model (Sec.~\ref{sec:framework}) based on a semiring to support efficient probability computation over probabilistic KGs.

	\item A practical implementation (Sec.~\ref{sec:algo}),
		\name\footnote{\url{https://github.com/gaurgarima/HaPPI}}, extends a provenance-aware property graph system,
\huka\footnote{\url{https://github.com/gaurgarima/HUKA}}.  Our algorithm can be also used in conjunction with other works, like \huka \cite{huka,gaur2017} and \provsql \cite{provsql},
based on the same underlying system to expand their support for probabilistic data.

		\item A theoretical analysis as well as an empirical evaluation
			highlighting the easy maintainability of \name under insertion of
			edges (i.e., new facts) to the KG.
			
		\item Finally, a proposal for an \emph{adaptive} framework that
			leverages the superior performance of \name and knowledge
			compilation tools at different regions of answer sizes. Our
			extensive empirical evaluation, using queries over \gmark,
a synthetic KG, and, \yago, a real-world KG, shows that this adaptive system
			outperforms any existing method used in isolation. 

\end{enumerate}

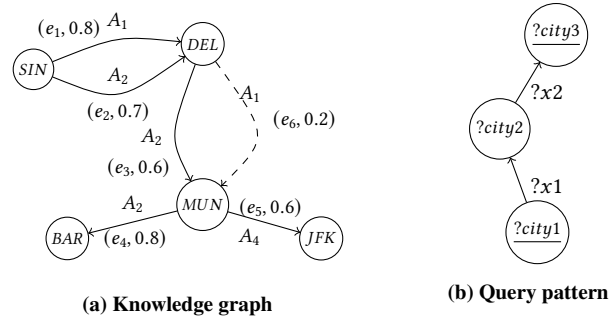
\begin{figure}[t]
% KG
	\begin{minipage}{0.6\columnwidth}
	\begin{subfigure}[t]{1\textwidth}
		\centering
		\begin{tikzpicture}[scale=0.45, transform shape]
			\tikzstyle{every node} = [circle,draw=black]
			\node (c1) at (-5,5) {\huge $SIN$};
			\node (c2) at (0,5.75) {\huge $DEL$};
			\node (c3) at (0,1) {\huge $MUN$};
			\node (c4) at (-4,0) {\huge $BAR$};
			\node (c5) at (3.5,0) {\huge $JFK$};
%			\tikzstyle{every node} = [circle, fill=gray!30, draw=black]
			\tikzstyle{every node} = [circle]
			
			%e1
			\draw[->]  (c1)  .. controls(-2.5,6) .. (c2) node [midway,above,fill=none,fontscale=4] {$A_{1}$};     

			%e2
			\draw[->]  (c1) .. controls(-2.5,4) .. (c2) node [midway,above,fill=none,fontscale=4] {$A_{2}$};

			%e3		
			\draw[->]  (c2)  .. controls(-1,3) .. (c3);     
			\node[fontscale=4] (a1) at (-1.5,3) { $A_{2}$};
			
				%e4
			%\draw[->]  (c3)  .. controls(-2,1) .. (c4) node [midway,above,fill=none,fontscale=4] {$A_{2}$};   
		\draw[->]  (c3)  -- (c4) node [midway,above,fill=none,fontscale=4] {$A_{2}$};   
		
			%e5			
			\draw[->]  (c3) -- (c5) node [midway,below,fill=none,pos=0.3,fontscale=4] {$A_{4}$};     
		
			%e6
\draw[->, dashed]  (c2)  .. controls(2,3) .. (c3) node [midway,above,pos=0.3,fill=none,fontscale=4] {$A_{1}$};

			\node[fontscale=3.5] (e1) at (-3.95,6.2) { $(e_{1},0.8)$};
			\node[fontscale=3.5] (e2) at (-2.5,3.75) { $(e_{2},0.7)$};			
				\node[fontscale=3.5] (e3) at (-1.9,2.1) { $(e_{3},0.6)$}; 
				
			\node[fontscale=3.5] (e4) at (-2,0) { $(e_{4},0.8)$};
			\node[fontscale=3.5] (e5) at (2,0.9) { $(e_{5},0.6)$};
			\node[fontscale=3.5] (e6) at (3,3.5) { $(e_{6},0.2)$};

\end{tikzpicture}   
\caption{Knowledge graph}
\label{fig:kg}
	\end{subfigure}
\end{minipage}\hfill
\begin{minipage}{0.3\columnwidth}
	\begin{subfigure}[t]{1\textwidth}
	\centering
		\begin{tikzpicture}[scale=0.5, transform shape]
			\tikzstyle{every node} = [circle, draw=black]
			
			\node(a) at (3,-1.75) {\huge \underline{$?city1$}};
			\node (b) at (2,1) {\huge $?city2$};
			\node (c) at (3.5,3.5) {\huge \underline{$?city3$}};
	
			\tikzstyle{every node} = [circle]
			
			%x1
			\draw[->] (a) -- (b);
			\node[fontscale=3.5] (x1) at (3.25,-0.5) { $?x1$};
	
			%x2
			\draw[->] (b) --(c);
			\node[fontscale=3.5] (x2) at (3.25,2) { $?x2$};			
	\end{tikzpicture}   
%	\captionof{figure}[]{Query Pattern along with SPARQL equivalent}
	\caption{Query pattern}
	\label{fig:query}
	\end{subfigure}
\end{minipage}

\figcaption{Running example of flight knowledge graph, a query for pairs of cities with one-stop connection between them, and a listing of edge probabilities.}
\vspace*{2mm}
\label{fig:runningEg}
\end{figure}

\section{Background}
\label{sec:background}

\subsection{Probabilistic Knowledge Graph}

A probabilistic knowledge graph is a graph $G \left(V, E, L, Id, Pr\right)$ with vertex-set $V$ representing the entities, labeled edge-set $E$ with each edge $e$ represented as $\langle u,l,v \rangle$ with $l \in L$ encoding the relation between two vertices $u$ and $v$. 
It is also common to refer to an edge $e = \langle u, l, v \rangle$ in the knowledge graph as a \emph{fact} where $u$, $l$, $v$ are subject, predicate and object of the fact respectively.
Each edge is assigned a unique id, ${Id}: E \to \mathcal{N}$. Further, we associate with each edge a value between $0$ and $1$, $Pr: E \to [0,1]$, representing the probability of the corresponding fact. We make the standard \emph{edge independence assumption} where the existence of an edge is independent of the other edges in the KG.
Unlike a deterministic KG, the presence of each fact in 
the KG is a probabilistic event, the probability of which is referred to as the \emph{existential probability} of the fact.

\subsubsection*{Possible World Semantics}

Equivalently, we can interpret the probabilistic knowledge graph as a 
collection of edge-induced subgraphs, called \emph{possible worlds},
$\mathcal{G_{PW}} = \{G_{1}, \dots, G_{n}\}$ of the knowledge graph $G$.
There is a probability distribution, $\mathcal{P}$, defined over all possible worlds, such that $\Sigma_{G_i \in G_{PW}} \mathcal{P}(G_i) = 1.$ With edge-independence assumption as before, $P(G_i)$ in terms of probability of its edges is
\begin{align}
\mathcal{P}(G_{i}) = \prod_{e_{i}\in E_{i}} Pr(e_{i}) \times \prod_{e_{i}\in E\setminus  E_{i}} ( 1 - Pr(e_{i}))
\end{align}
where $E_{i}$ is the set of edges present in $G_{i}$.

\subsection{Graph Query}
A graph query is formulated as a graph pattern that a user intends to find in
the knowledge graph $G$. Similar to the triple representation of KG, a graph
query expresses the query graph pattern as a collection of triples.  Each edge
of the graph pattern corresponds to a triple pattern in the query. Similar to
graph triples, a \emph{triple pattern} consists of subject, predicate, and
object.  Both subject and object can be variables or be bound to one of the
vertices of the KG. A predicate could be a variable or one of the labels of the
edges of the KG. A graph query pattern can be realized as a query using the SPARQL query language.

A graph query can be interpreted as a \emph{conjunction} of triple patterns and
the aim is to find all possible bindings to the variables of the triple
patterns as a whole. These conjunctive graph queries are popularly known as
\emph{Basic Graph Pattern} (BGP) queries~\cite{sparqlnew}.  Such SPARQL queries
can be expressed as relational SPJ (Select-Project-Join) queries
\cite{cyganiak2005}. 
In this work we are handling a subset of SPARQL queries which does not include more sophisticated operators such as Union, Optional, etc. We see inclusion of these SPARQL queries as future work. 

\subsection{Running Example} Throughout the paper we use the following running example. Consider an air-travel agency that provides a flight search engine. The search engine uses the knowledge graph shown in Fig.~\ref{fig:kg} as its knowledge base. The nodes in the graph denote the airport codes of different cities it operates in: Singapore (SIN), New Delhi (DEL), Munich (MUN), Barcelona (BAR), and New York (JFK). An edge between two cities represents a direct flight with edge label representing the airline operating the flight.
 The operation of each flight is dependent on different factors, like environmental, financial, political, etc. Thus, the existence of the edges in the graph is a probabilistic event. Each edge of the KG, shown in Fig.~\ref{fig:kg}, is annotated with its existential probability.
Suppose a user wants to list down all the pairs of cities that have  \emph{one-stop} connecting flights between them with a joint probability greater than $0.5$. The corresponding query pattern, shown in Fig.~\ref{fig:query}, and the equivalent SPARQL query, 

\texttt{
	\begingroup
		\fontsize{8pt}{12pt}
		\textbf{Select} \textit{?city1} \textit{?city2} \textbf{Where} \{ \\
		\null \hspace{0.8cm}\textit{?city1} ?x1 \textit{?city2}. \\
		\null \hspace{0.8cm}\textit{?city2} ?x2 \textit{?city3}. \}\\
	\endgroup
	}

This query is a collection of $2$ triple patterns, $\langle ?city1, ?x1,
?city2 \rangle$ and $\langle ?city2, ?x2, ?city3 \rangle$. The variables in a
triple pattern has prefix, question-mark $?$. 

All the answers to the query (along with their derivation polynomials) are listed in Table~\ref{tab:result}. The answers will be further filtered out to report only matches with probability $>0.5$.

\subsection{Query Evaluation on Probabilistic Dataset}

On a deterministic graph, a graph query result is a collection of projected
nodes of subgraphs that matches the query pattern. However, on a probabilistic
KG, the query engine has to perform additional task of computing probability of
each result item. This task is often referred to as \emph{probabilistic
inference}~\cite{dalvidb}.  Technically, \emph{probability inference} involves matching the
query pattern over all \emph{possible worlds} and computing the marginal 
probability of matches. The result of a query $Q$ over probabilistic KG is
given as,
\begin{align}
	A(Q) = \{ \langle m_{i}, Pr(m_{i}) \rangle, \dots, \langle m_{k}, Pr(m_{k}) \rangle \}
\end{align}
where each answer $m_{i}$ has a probability $Pr(m_{i})$ associated with it 
representing the overall probability of it being part of the answer set.

The number of possible world of a KG will be exponential in number
of edges, specifically $2^{|E|}$. Thus, the approach of enumerating all the
worlds and evaluating the query on each of them is impractical. Instead, each
answer is associated with a Boolean formula and the probability of this Boolean
formula to be \emph{true} over all assignments gives the probability of the
corresponding result item. This Boolean formula is referred to as \emph{lineage}.
We discuss how to compute lineage for query answers in Sec.~\ref{subsec:prov}.
For now, we continue with discussing the techniques to compute probability of
lineage.

A na\"ive technique to compute probability is to enumerate all
assignments of the Boolean formula and then count the satisfying
ones. This method of computing the probability is called
\emph{possible world computation}.  Clearly, the possible worlds
computation scales exponentially with the number of Boolean variables
in a formula.  The probability computation of a Boolean formula is
equivalent to the weighted model counting problem. Thus, probabilistic
inference is \emph{\#P-hard}.

Various heuristics are used to tackle this problem. These approaches
fall under the category of \emph{intensional query evaluation}.  One
of the most popular techniques used in intensional query evaluation is
based on converting a given Boolean formula into d-NNF forms, which
are known to be more tractable. This methodology is known as
\emph{compilation} \cite{compilation}. More details about different
kinds of compilation techniques are given in a popular survey by
Broeck and Suciu \cite{survey}. Tools such as \ctd
\cite{c2d}, D4 \cite{d4} are based on this compilation strategy.

\subsection{Lineage Computation}
\label{subsec:prov}

The provenance semiring model piggybacks the lineage computation on to the
query processing. Thus, the lineage is computed using the semiring $\left(PosBool(X_{\bullet}), \land,
\lor, 0, 1 \right)$ over positive Boolean expressions.
Tuples are annotated with independent Boolean variables $X_{1} ,\dots, X_{i}$.
For a query answer $m$, the lineage is a DNF (\emph{disjunctive normal form})
formula where each conjunct represents a possible derivation of $m$. A
conjunct is constructed by applying the AND operator ($\land$) on the Boolean
variables of the edges involved in that particular derivation.
For instance, in Table~\ref{tab:result}, the lineage of answer of the query,
shown in Fig.~\ref{fig:kg}, is given as $PosBool$. The Boolean formula
corresponding to the answer (SIN,MUN) has $2$ conjuncts $\left(e_{1} \land
e_{3}\right)$ and $\left(e_{2} \land e_{3}\right)$, each representing a
possible derivation of the answer. 

\section{Related Work}
\label{sec:relwork}

\subsubsection*{Query Processing on Probabilistic Database}
\label{subsec:probWork}

The problem of query processing on probabilistic databases is quite
well-studied \cite{Cavallo, dichotomy, probabilisticGM, greenModel}.  A
fundamental result, aka the \emph{dichotomy} theorem, stating that either a
query can be computed in polynomial time or it is provably \#P-hard, was stated
in \cite{dichotomy}. 

Solutions addressing probabilistic inference can be categorized into two
classes: (a)~\emph{extensional}: those that compile the inference over
probabilistic data into a query plan, and (b)~\emph{intensional}: those that
directly manipulate the probability derivation expression of each tuple in the
output making use of its provenance.
Extensional techniques \cite{dalvidb, top-k, Dey, deshpande} are known to be more
efficient than intensional evaluation for the class of queries they can handle.
Extensional solutions cannot process arbitrary conjunctive queries. They cannot
handle \emph{self-joins} as the safe plan construction is based on the
assumption that the relations participating in the plan are independent. A self
join query violates this assumption.  Few practical systems such as
MystiQ~\cite{mystiq} and Orion~\cite{orion} have adopted extensional techniques
for simple queries. For hard queries, they rely on approaches based on
approximation algorithms~\cite{simulation,mcdb}.  

Event tables, which eventually become a standard way of data modeling of
intensional techniques, were introduced in~\cite{fuhr, esteban}. Green et
al.~\cite{provenancesemirings} proposed a generalized semiring model to
annotate probabilistic tuples with Boolean variables.
Trio~\cite{trio}, one of the early systems using intensional approach, relied
on data lineage to compute the probabilities of simple queries~\cite{trio_prob}
and Monte-Carlo simulation~\cite{simulation} for complex ones. Other notable
systems equipped to process queries over probabilistic database are
MayBMS~\cite{maybms} and SPROUT~\cite{sprout}.  Recently, Senellart et al.
presented \provsql~\cite{provsql}, a framework strictly adhering to the
semiring model~\cite{provenancesemirings, m-semirings} to represent
probabilistic data. For probabilistic inference, it supports three standard
techniques: possible world computation, knowledge
compilation~\cite{compilation} and Monte-Carlo simulation~\cite{simulation}.

\provsql can be considered as closest to our
system in terms of data modeling as we also use the semiring based framework
for data modeling. Unlike our focus on knowledge graphs, \provsql addresses the problem
for relational databases. 
Further, we equipped our solution with an efficient symbolic probability computation technique using a novel semiring. Note that our proposal can be easily incorporated into the \provsql framework. 

\subsubsection*{Query Provenance}

The capability of provenance to provide better insight into the query result
has resulted into various provenance aspects. These include \emph{why}
provenance~\cite{whyandwhere} explaining why an answer is part of a query
result, \emph{how} provenance~\cite{provenancesemirings} providing information
about the derivation process of an answer, and \emph{when}
provenance~\cite{whenprovenance} for tracking temporal data.  In this work, our
intent is to track the provenance (derivation process) of a query pattern match as well as
that of probability computation. Complying with the provenance semiring model \cite{provenancesemirings}, we propose a novel semiring which, unlike the $PosBool$
semiring, captures the \emph{how} provenance of the probability
\emph{computation} as well.

\section{Theoretical Framework}
\label{sec:framework}

We propose a framework to produce a symbolic expression corresponding to each
query answer. The evaluation of this expression with the edge probabilities
substituted will be the probability of the answer.  This is in contrast to
computation via PosBool \cite{provenancesemirings}. There is a layer of
probability computation starting from the PosBool expression~\cite{trio_prob,
fuhr} which is not a mere substitution of values.
The major attraction of semiring frameworks is that the construction of their
answer representations can be piggybacked onto query evaluation. Importantly,
our symbolic probability expressions lie in a semiring with a homomorphic
correspondence with PosBool.

We adopt the \emph{provenance semiring} model
\cite{provenancesemirings} to generate these expressions. Similar to
the $PosBool$ semiring, we annotate each edge $e_{i}$ with a random
variable $X_{i}$, indicative of the presence $(X_{i}=1)$ with
probability $p_i$ or absence $(X_{i}=0)$ with probability $(1 - p_i)$
of the edge. In our framework, probabilities of events in $PosBool$
are computed and stored as polynomials over indeterminates $\{p_i\}$.

In the $PosBool$ framework each derivation is associated
with the \emph{conjunct} of Boolean variables associated to the
involved edges. Here, the presence of each derivation $d_{i}$ of an
answer is interpreted as an event $D_{i}$ which, in turn, is defined by
the presence of edges involved in the derivation.  The presence of an
answer will rely on the presence of at least one of these
derivations. Thus, the probability of the answer is computed as
$Pr(\bigvee_{i} D_{i})$.  For instance, in Table~\ref{tab:result},
answer (SIN,MUN) has two derivations: $d_{1}= \{e_{1},e_{3}\}$ and
$d_{2}= \{e_{2},e_{3}\}$. The probability of event $D_{1}$ is
the product of the probabilities of edges $e_{1}$ and $e_{2}$ and that
of $D_{2}$ of $e_{2}$ and $e_{3}$. The probability of (SIN,MUN) being
a part of the result is computed as $Pr(D_{1} \vee D_{2})$.

An incremental step in computing the probability of a PosBool event can be
computing for a disjunct $Pr\left(E_1 \vee E_2\right) = Pr\left(E_1\right) +
Pr\left(E_2\right) - Pr\left(E_1 \wedge E_2\right)$ or for a conjunct
$Pr\left(E_1 \wedge E_2\right)$.

Notice that, in either case, computing $Pr\left(E_1 \wedge E_2\right)$
seems to necessitate keeping track of the exact events $E_1$ and
$E_2$. We show, however, that this is not necessary. Just the symbolic
probability expressions of the event $E_1$ and $E_2$ are enough for us
to compute probabilities incrementally. The standard edge independence
model is crucial for the correctness of this claim.

We start formally defining the semiring and presenting a semiring
homomorphism from $PosBool$ to our semiring.
Our domain of interest is the polynomial ring with integer
coefficients $Z[p_1, \ldots, p_n]$. We are interested in a subset of
this domain which we define as follows:

\begin{definition}[Flat Monomial; Flat Polynomial]
	We call a monomial \emph{flat} if it is a product of distinct variables.  A
	\emph{flat} integer polynomial is defined to be a sum of flat monomials.
\end{definition}

The function $flat$ flattens out a polynomial by reducing every
exponent greater than 1 to 1. Formally, for $k \in \mathbb{Z}$
  \begin{align}
    flat(k) &= k\\
    flat(f + g) &= flat(f) + flat(g) \\
    flat\left(c \cdot \prod_{i} p_{i}^{k_i}\right) &= c \cdot \prod_{i} p_{i} 
\end{align}
  where $k_i > 0$ for all $i, 0 < i \le n$ and $c \in \mathbb{Z}$

We use the notation $\overline{f}$ for $flat(f)$ here on. 

We propose two non-standard operators $\oplus$
and $\otimes$ on the set of flat integer polynomials. For flat
polynomials $f$ and $g$, the two operators are defined as follows:

\begin{itemize}
	\item $f \otimes g = \overline{f \times g}$
	\item $f \oplus g = f + g - (f \otimes g)$
\end{itemize}
where $+$, $-$ and $\times$ are the standard operations on
polynomials. Notice that the resultants of these operations are also
flat polynomials.

Let $Z_F[p_1, p_2, \ldots ,p_n]$ denote the set of flat integer
polynomials generated by $\{1,p_1,p_2,\ldots,p_n\}$ and the operators
$\otimes$ and $\oplus$. This is our domain of interest.

\begin{theorem}\label{thm:semiring}
  $(Z_F[p_1, p_2, \ldots ,p_n], \oplus, \otimes, 0,1)$ is a commutative semiring.
\end{theorem}

We defer the proof to Appendix \ref{appendix1}. 
%We omit the proof due to space restrictions.

Next, we define a map $H$ from $(PosBool(X_1, X_2, \ldots, X_n),$
$\lor, \land, 0, 1)$ to $(Z_F[p_1, p_2, \ldots, p_n], \oplus, \otimes,
0,1)$. This is done inductively on the structure of $PosBool$.

For $\theta_i, \theta_j \in PosBool(X_1,X_2,\ldots,X_n)$,
\begin{align}
	H(X_{i}) &= p_{i} \\
	H(\theta_{i} \land \theta_{j}) &= H(\theta_{i}) \otimes
          H(\theta_{j}) \\
	H(\theta_{i} \lor \theta_{j}) &= H(\theta_{i}) \oplus
          H(\theta_{j}) 
\end{align}

As defined, the above map is not yet guaranteed to be
well-defined. Elements of $PosBool$ obey relations beyond the usual
semiring axioms. For instance, consider that for arbitrary formulae
$E, F$ in $PosBool$, $E \lor (E \land F) = E$ (absorption), or that
$\lor$ also distributes over $\land$. These relations are precisely
why, for example, there is no canonical map from $PosBool$ to the
\emph{WHY}-provenance semiring~\cite{whyandwhere}. That $H$ is
well-defined would imply that these relations also hold in our
semiring $Z_F$. This is not obvious and turns out to be non-trivial to
prove directly. \comment {While we omit this proof and the proof of
  Theorem~\ref{thm:symprob} for lack of space,} The well-definedness
of $H$ and Theorem ~\ref{thm:symprob} are best proved in tandem.

\begin{theorem} \label{thm:symprob}
  For all formulae $E$ in $PosBool$
  $$ Pr(E) = H(E)\left(Pr(X_1), \ldots, Pr(X_n)\right) $$
  where the RHS denotes the flat polynomial $H(E)$ evaluated at $p_i = Pr(X_i)$ for $1 \le i \le n$.
  
\end{theorem}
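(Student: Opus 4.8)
The plan is to prove Theorem~\ref{thm:symprob} by a single structural induction, but to route the argument through \emph{Boolean-point evaluation} rather than tracking the underlying events directly; this has the advantage of simultaneously establishing the well-definedness of $H$, exactly as the text promises. The starting observation is that every element of $Z_F$ is a \emph{multilinear} polynomial: a flat monomial is a product of distinct $p_i$, so each variable occurs to degree at most $1$, and the text already notes that $\oplus$ and $\otimes$ preserve flatness. Hence $H(E)$ is always a multilinear polynomial in $p_1,\ldots,p_n$. I will also use the standard fact that, under the edge-independence assumption, $Pr(E)$—viewed as a function of the variable probabilities $q_i := Pr(X_i)$—is precisely the \emph{multilinear extension} of the Boolean function $E$, i.e. the unique multilinear polynomial (with integer coefficients, via inclusion--exclusion) whose value at every Boolean point $a \in \{0,1\}^n$ equals $E(a)$.

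The key step is to show, by induction on the syntax of $E$, that for every Boolean assignment $a \in \{0,1\}^n$,
$$ H(E)\big|_{p = a} \;=\; E(a), $$
where the right-hand side is the ordinary Boolean value. The base cases $H(X_i)=p_i$, $H(0)=0$, $H(1)=1$ are immediate. For the inductive step the crucial observation is that flattening does not change the value of a polynomial at a Boolean point, since $0^k=0$ and $1^k=1$ for $k \ge 1$; consequently $\otimes$ evaluated at $a$ reduces to ordinary multiplication, and for $u,v \in \{0,1\}$ one has $u\cdot v = u \wedge v$ while $u + v - uv = u \vee v$. Thus $\otimes$ and $\oplus$ implement Boolean $\wedge$ and $\vee$ at every Boolean point, and the induction closes cleanly.

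Having established agreement on all of $\{0,1\}^n$, I conclude by invoking uniqueness of multilinear interpolation: the $2^n$ monomials $\prod_{i\in S} p_i$ form a basis for multilinear polynomials, and evaluation at $\{0,1\}^n$ is a bijection onto integer-valued functions on the cube, so a multilinear polynomial is determined by its Boolean-point values. Since $H(E)$ and the multilinear extension of $E$ are both multilinear and agree on $\{0,1\}^n$, they coincide as integer polynomials. Evaluating this polynomial identity at $p_i = q_i = Pr(X_i)$ then yields $H(E)(Pr(X_1),\ldots,Pr(X_n)) = Pr(E)$, which is the theorem. The well-definedness of $H$ falls out of the very same conclusion, which is precisely why the two are proved in tandem: if $E$ and $E'$ are equivalent in $PosBool$ they compute the same Boolean function, hence share the same multilinear extension, hence $H(E)=H(E')$ as polynomials.

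I expect the genuine obstacle to be the interaction between flattening and the semiring operations at \emph{generic} (non-Boolean) points, where $\otimes$ really does differ from polynomial multiplication and a direct inductive bookkeeping would have to confront absorption and the extra distributivity laws of $PosBool$. The argument above sidesteps this entirely by verifying the identity only on $\{0,1\}^n$ and letting multilinear uniqueness propagate it to all of $[0,1]^n$; the one routine point that must be checked carefully is that $flat$ commutes with evaluation at Boolean points, which follows termwise from the defining clauses of $flat$.
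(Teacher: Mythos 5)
Your proof is correct, but it takes a genuinely different route from the paper's. The paper never leaves the semiring: it first observes that $H$ is invariant under rewriting by semiring axioms (since $Z_F$ is itself a semiring), uses this to reduce an arbitrary formula to DNF, and then proves $Pr = H$ for DNF formulae by induction on the number of disjuncts---pure conjuncts are handled via independence, and the inductive step is an inclusion--exclusion computation that unwinds the definitions of $\oplus$ and $\otimes$ together with distributivity in $Z_F$. Well-definedness and the theorem then follow jointly from the chain $H(E_1)=H(F_1)=Pr(F_1)=Pr(F_2)=H(F_2)=H(E_2)$. You instead verify the identity only at Boolean points, where flattening is invisible (since $0^k=0$ and $1^k=1$), so that $\otimes$ and $\oplus$ literally compute $\wedge$ and $\vee$ there; you then transfer the identity to generic arguments by uniqueness of multilinear extension, using the standard fact that under the independence assumption $Pr(E)$, as a function of the $Pr(X_i)$, is precisely the multilinear extension of the Boolean function computed by $E$. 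Your route buys a sharper conceptual statement ($H(E)$ \emph{is} the multilinear extension of $E$), avoids DNF normalization and the inclusion--exclusion induction, and makes well-definedness immediate, since logically equivalent formulae determine the same Boolean function and hence the same multilinear extension. The paper's route buys self-containedness---no appeal to interpolation on the cube---and its one-disjunct-at-a-time induction mirrors the incremental construction of Sec.~\ref{subsec:symConstruction}, so it doubles as a correctness argument for the system's actual algorithm. Both arguments rely essentially on edge independence, and both settle the theorem and the well-definedness of $H$ in tandem.
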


This asserts that the probability of a formula $E$ in $PosBool$ is,
symbolically, the flat polynomial associated to it by $H$.

\subsection{Proof Sketch of Theorem~\ref{thm:symprob}}
\label{subsec:correctness}

For positive Boolean expressions $E_1$ and $E_2$ let $E_1
\overset{s.r}{\leadsto} E_2$ mean that $E_1$ can be reduced to $E_2$
using only semiring axioms. Given that $H$ is defined structurally
over $(\wedge, \vee)$ and that $Z_F$ is a semiring itself, it is clear
that
\begin{align}\label{srreduce}
  E_1 \overset{s.r}{\leadsto} E_2 \implies H(E_1) = H(E_2)
\end{align}

We reduce both the proof of well-definedness of $H$ and of
Theorem \ref{thm:symprob} to the special case of Theorem~\ref{thm:symprob} for formulae in DNF form. Notice that any
Boolean expression can be reduced to DNF form using only semiring
axioms (a simple inductive argument shows we only need left and right
distributivity of $\wedge$ over $\vee$).

For formulae $E_1$ and $E_2$
that are equivalent in $PosBool$, let $F_1$ and $F_2$ be formulae in
DNF form such that $E_1 \overset{s.r}{\leadsto} F_1$ and $E_2
\overset{s.r}{\leadsto} F_2$. Note that $E_1$, $E_2$, $F_1$ and $F_2$ are all
equivalent in $PosBool$. Consider the chain of equalities:
\begin{align}
  H(E_1) = H(F_1) \overset{?}{=} Pr(F_1) = Pr(F_2) \overset{?}{=}
  H(F_2) = H(E_2) \label{eqn:chain}
\end{align}
The middle equality follows from the equivalence of $F_1$ and $F_2$.
The equalities marked $\overset{?}{=}$ are the only ones that we have not
shown yet. These are invocations of Theorem~\ref{thm:symprob}
for formulae in DNF form. If these are established, we would have proved that
$H$ is well-defined ($H(E_1) = H(E_2)$) and also Theorem~\ref{thm:symprob} ($Pr(E_1) = Pr(F_1) = H(E_1)$). 

We, thus, proceed to prove Theorem~\ref{thm:symprob} for formulae in DNF
form (without assuming well-definedness of $H$).

\begin{lemma}
	\label{lemma:conjunct}
  For Boolean variables $\theta_i$ in $PosBool$ 
  $$ Pr(\bigwedge \limits_{i} \theta_{i}) = H(\bigwedge \limits_{i}
  \theta_{i}) $$
\end{lemma}

This is Theorem \ref{thm:symprob} for pure conjuncts.

\begin{proof}
  Let $I = \{\,i \mid X_i = \theta_j \text{ for some } j\,\}$. Recall
  that in a KG existence of different edges are independent
  events. Since this is what we are modeling all along, our random
  variables $X_i$ are also independent. Thus,
  $$Pr(\bigwedge \limits_{i} \theta_{i}) = \prod_{j \in I} p_j $$
  Also,
  $$H(\bigwedge \limits_{i} \theta_{i}) = \underset{i}{\otimes}\,
  H(\theta_i) = \underset{i}{\otimes}\, p_i= \prod_{j \in I} p_i$$ The
  last equality is a consequence of how flattening of polynomials works.
	\hfill{}
\end{proof}

\begin{lemma}
  For conjuncts of Boolean variables $C_i$
  
$$Pr(\bigvee \limits_{i} C_{i}) = H(\bigvee \limits_{i}
  C_{i})$$

  This is Theorem \ref{thm:symprob} for formulae in DNF form.
\end{lemma}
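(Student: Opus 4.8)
The plan is to prove this by induction on the number $n$ of disjuncts (conjunct-clauses) in the DNF formula $\bigvee_{i} C_i$. The base case $n=1$ is precisely Lemma~\ref{lemma:conjunct}, which already covers a single conjunct, including repeated variables, since that lemma's index set $I$ collapses duplicate variables and the flattening built into $H$ does the same on the symbolic side. So all of the content sits in the inductive step.

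For the inductive step I would peel off the last clause, writing $\bigvee_{i=1}^{n} C_i = E_1 \vee C_n$ with $E_1 = \bigvee_{i=1}^{n-1} C_i$. On the probability side, the general inclusion–exclusion identity for two events gives
$$Pr(E_1 \vee C_n) = Pr(E_1) + Pr(C_n) - Pr(E_1 \wedge C_n),$$
while on the symbolic side the definition of $\oplus$ gives
$$H(E_1 \vee C_n) = H(E_1) \oplus H(C_n) = H(E_1) + H(C_n) - \bigl(H(E_1) \otimes H(C_n)\bigr).$$
It then suffices to match the three corresponding terms. Two are immediate: $Pr(E_1) = H(E_1)$ by the induction hypothesis, since $E_1$ is a DNF with $n-1$ disjuncts, and $Pr(C_n) = H(C_n)$ by Lemma~\ref{lemma:conjunct}.

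The cross term $Pr(E_1 \wedge C_n)$ is the step I expect to be the crux. The idea is to distribute: $E_1 \wedge C_n \overset{s.r}{\leadsto} \bigvee_{i=1}^{n-1}(C_i \wedge C_n)$, using only left/right distributivity of $\wedge$ over $\vee$, which are genuine semiring axioms. This exhibits the conjunction as a formal DNF with only $n-1$ disjuncts, so the induction hypothesis applies and yields $Pr(E_1 \wedge C_n) = H\bigl(\bigvee_{i=1}^{n-1}(C_i \wedge C_n)\bigr)$. Because distributivity is a semiring axiom, \eqref{srreduce} gives $H(E_1 \wedge C_n) = H\bigl(\bigvee_{i=1}^{n-1}(C_i \wedge C_n)\bigr)$, and the structural definition of $H$ on a conjunction gives $H(E_1 \wedge C_n) = H(E_1) \otimes H(C_n)$. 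Chaining these equalities shows $Pr(E_1 \wedge C_n) = H(E_1) \otimes H(C_n)$, which is exactly the subtracted term on the symbolic side; substituting the three matched terms closes the induction.

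The subtlety to watch, and the reason the induction must run on the \emph{syntactic} number of disjuncts rather than on the number of variables, is that the clauses $C_i \wedge C_n$ may share variables or be logically redundant; collapsing that redundancy would require idempotence or absorption, which are exactly the non-semiring relations that $H$ is not \emph{a priori} known to respect. I therefore never invoke them: I treat $\bigvee_{i=1}^{n-1}(C_i \wedge C_n)$ as a formal DNF with $n-1$ disjuncts, let Lemma~\ref{lemma:conjunct} absorb any within-clause variable repetition through flattening, and use distributivity as the only cross-clause manipulation, which is safe by \eqref{srreduce}. Edge independence enters the argument solely through Lemma~\ref{lemma:conjunct}.
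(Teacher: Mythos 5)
Your proof is correct and follows essentially the same route as the paper's: induction on the number of clauses, inclusion--exclusion on the last clause, and handling of the cross term by distributing $\wedge$ over $\vee$ so that the induction hypothesis applies to $\bigvee_{i=1}^{n-1}(C_i \wedge C_n)$, with the identification $H(E_1 \wedge C_n) = H(E_1) \otimes H(C_n)$ coming from distributivity in $Z_F$. The only difference is expository: you make explicit (via \eqref{srreduce} and the remark about avoiding idempotence/absorption) the justification that the paper's displayed computation leaves implicit.
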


\begin{proof}
  We prove this by induction on the size of the disjunction. The base
  case is simply Lemma \ref{lemma:conjunct}. Consider,
 {\small
  \begin{align*}
    Pr\left(\bigvee \limits_{i = 1}^{n} C_i\right) &= Pr \left(\left(\bigvee \limits_{i = 1}^{n-1} C_i\right) \vee C_n\right) \\
%    &= Pr \left(\bigvee \limits_{i = 1}^{n-1} C_i\right) + Pr(C_n) - Pr \left(\left(\bigvee \limits_{i = 1}^{n-1} C_i\right)\wedge C_n\right) \\
    &= Pr \left(\bigvee \limits_{i = 1}^{n-1} C_i\right) + Pr(C_n) - Pr \left(\bigvee \limits_{i = 1}^{n-1} {\left(C_i\wedge C_n\right)}\right)
  \end{align*}
 }%
 Applying induction hypothesis
{\scriptsize \begin{align*}   
Pr\left(\bigvee \limits_{i = 1}^{n} C_i\right) &= H\left(\bigvee \limits_{i = 1}^{n-1} C_i\right) + H(C_n) - H \left(\bigvee \limits_{i = 1}^{n-1} {\left(C_i\wedge C_n\right)}\right) \\
 %   &= H\left(\bigvee \limits_{i = 1}^{n-1} C_i\right) + H(C_n) - H \left(\left(\bigvee \limits_{i = 1}^{n-1} C_i \right) \wedge C_n\right) \\
	  &= H\left(\bigvee \limits_{i = 1}^{n-1} C_i\right) + H(C_n) - \left( H \left(\bigvee \limits_{i = 1}^{n-1} C_i \right) \otimes H\left(C_n\right) \right) \\
    &= H\left(\bigvee \limits_{i = 1}^{n-1} C_i\right) \oplus H(C_n)
= H\left(\bigvee \limits_{i = 1}^{n} C_i\right)
  \end{align*}
}%
\end{proof}%
Having proved Theorem~\ref{thm:symprob} for formulae in DNF form, we have the full equality chain in Eq.~\eqref{eqn:chain}.

\section{Algorithm}
\label{sec:algo}

\begin{table}[t]
	\tabcaption{Result of running example query shown in Fig.~\ref{fig:query}}
	\label{tab:result}
	\resizebox{\columnwidth}{!}{
		%\begin{tabular}{|l|c|c|c|c|l|}
		\begin{tabular}{llllll}
			\toprule
			\bf Result & \bf ProvPoly & \bf PosBool & \bf dervE & \bf symE & \bf Probability \\
			\midrule
			(DEL,BAR) & $e_{3}e_{4} $ & $e_{3} \land e_{4} $ & $e_{3} \otimes e_{4} $ & $e_{3}e_{4} $& $0.480$ \\
%			\hline
			(DEL,JKF) & $e_{3}e_{5} $ & $e_{3} \land e_{5} $ & $e_{3} \otimes e_{5} $ & $e_{3}e_{5} $& $0.360$ \\ 
%			\hline
			\multirow{2}{*}{(SIN,MUN)} & $e_{1}e_{3} +$ & $(e_{1} \land e_{3}) \lor$ & $(e_{1} \otimes e_{3}) \oplus$ & $e_{1}e_{3} + e_{2}e_{3}$ & \multirow{2}{*}{$0.564$} \\
			& $e_{2}e_{3}$ & $(e_{2}\land e_{3})$ & $(e_{2}\otimes e_{3})$ & $- e_{1}e_{2}e_{3}$ & \\
			\bottomrule
		\end{tabular}
	}
\end{table}

We extend our existing framework \huka~ \footnote{\url{https://github.com/gaurgarima/HUKA}} \cite{huka} that maintains query
	results along with their provenance for deterministic KG to probabilistic
	knowledge graphs. \huka supports provenance-aware query computation and
	result maintenance over deterministic dynamic KGs. It captures the
	\emph{how} provenance of query answers using the provenance semiring model.
	Our framework, \name, on the other hand, employs the novel semiring
	introduced in Sec.~\ref{sec:framework} for symbolically computing the
	answer probabilities.  We next discuss how to construct these symbolic
	probability expressions and, further, when a KG edge changes, how to
	maintain the expressions.

\subsection{Construction of Probability Expressions}
\label{subsec:symConstruction}

By the virtue of the semiring framework \cite{provenancesemirings}, the symbolic expression construction is piggybacked up the answer computation. Here, we discuss the construction of these expressions conceptually. 

Suppose we are given a graph query $Q$ of size $s$, i.e., there are $s$ triple patterns
in the query. We want to compute the query over a probabilistic KG $G$. The
result set $R$ of query $Q$ is given as,
\begin{align*}
	R = \{\langle a_{1}, \mathcal{D}_{a_{1}}\rangle, \dots, \langle a_{j},
	\mathcal{D}_{a_{j}}\rangle \}
\end{align*}
where, $\mathcal{D}_{a_{i}} = \{D_1 \dots D_l\}$ is the collection of all
derivations of an answer $a_{i}$.  Each derivation $D_{i}: \langle
e_{p},\dots,e_{q}\rangle$ encodes a subgraph, involving edges $e_{p}, \dots,
e_{q}$ that result in the corresponding answer. We will use
${e_{\bullet}}$'s both for the edges and the corresponding indeterminates in
our polynomial rings.

Similar to constructing a Boolean formula using the PosBool semiring, we first
construct \emph{derivation} expressions using our flat polynomial semiring.
From the mapping established in Sec.~\ref{sec:framework}, the conjuncts of a
Boolean formula corresponds to the $\otimes$ terms. Analogous to the
disjunction-of-conjunction form of Boolean formula, the derivation expression
is addition ($\oplus$) of multiplicative ($\otimes$) terms, i.e., $\oplus_{D
\in \mathcal{D}_{a_{i}}} ( \otimes_{e_{j} \in D} e_{j} )$ corresponds to
$\vee_{D \in \mathcal{D}_{a_{i}}} ( \wedge_{e_{j} \in D} e_{j} )$. For each
derivation $D_{k}$ of answer $a_{i}$, we multiply ($\otimes$) edges involved in
it, and iteratively add ($\oplus$) expressions of all the derivations of that
answer.

For instance, the answer (SIN,MUN), in Table~\ref{tab:result}, has two
derivations; thus, it has $2$ conjuncts and $2$ multiplicative terms in Boolean
formula (\textsf{PosBool}) and derivation expression (\textsf{dervE})
respectively. 
The PosBool formula and
derivation expression of all answers of Fig.~\ref{fig:query} is given in
Table~\ref{tab:result}.  Later, to compute the exact probability, existing
systems pass on the Boolean formula to compilation and counting tools, whereas
\name unfolds the derivation expression to get the equivalent probability
symbolic expression and evaluate it.

We handle one derivation at a time and incrementally construct the symbolic
expression. We iterate over all the derivations of an answer.  At iteration
$i$, we flatten out the polynomial constructed by multiplying edges involved in
derivation $D_{i}$.  Then we incrementally add ($\oplus$) this resultant flat
polynomial of $D_{i}$ to the symbolic expression computed at iteration $(i-1)$
to get the new updated symbolic expression. After exhausting all the
derivations, we get the final symbolic expression of the answer. To get the
concrete probability, we evaluate this expression by assigning
$Pr(e_{\bullet})$ to each variable $e_{\bullet}$. For our example, using the
edge probabilities shown in Fig.~\ref{fig:kg}, the symbolic expressions
\textsf{symE} and the resulting probabilities are shown in
Fig.~\ref{tab:result}.  Here, only the probability of pair (SIN,MUN) is above
the query threshold $\left(\geq 0.5\right)$.

The translation from \textsf{dervE} to
\textsf{symE} expressions involves a complete simplification of the $\otimes$
and $\oplus$ operators using their definitions. This is done \emph{before}
substituting concrete values in them. To see why this matters, consider the
derivation expression $(e_{1} \otimes e_{2}) \oplus (e_{1} \otimes e_{3})$ and
let each edge have probability $0.5$. If we \emph{first} assign values
(probabilities) to the indeterminates and then simplify the expression we get
$0.25 + 0.25 - 0.0625 = 0.4375$ as the concrete probability. However, the
correct value is $0.375$.

Our framework can also handle self-joins. In case of graph queries, the
situation arises when an edge satisfies more than one triple pattern of a
query. The multiplication operator ($\otimes$) ensures that the probability of
an edge is considered only once irrespective of the number of triple patterns
it satisfies in a derivation. 

\subsection{Maintainability of \name}
\label{subsec:maintain}

By virtue of \emph{incremental} symbolic expression construction procedure,
\name is capable of maintaining query answers under edge update operations by using
the following \emph{inverted} indexes:

\begin{itemize}

	\item An inverted index, \textsf{edgeToSymE}, that maps an edge to the
		collection of \emph{symbolic} expressions in which it participates.
		%It helps in quickly determining edge participation in different
		%symbolic expressions.

	\item An inverse map, \textsf{symEval}, that associates a \emph{symbolic}
		expression to its current evaluation. This map essentially holds the
		current probability of all the answers.

\end{itemize}

We consider the following update operations: addition of an edge, deletion of
an edge, change of probability value of an edge.

\subsubsection{Addition of an Edge}
\label{subsubsec:edgeAddition}

\begin{table}[t]
\tabcaption{Updated query result after insertion of edge $e_{6}$.}
\label{tab:updatedResult}
\centering
\resizebox{\columnwidth}{!}{%
\begin{tabular}{lll}
	\toprule
	\bf Result & $\mathbf{symE}$ & \bf Probability \\
	\midrule
	(DEL,BAR) & $e_{3}e_{4} + e_{6}e_{4} - e_{3}e_{4}e_{6}$ & $0.5440$ \\ 
	(DEL,JFK) & $e_{3}e_{5} + e_{6}e_{5} - e_{3}e_{5}e_{6}$ & $0.4080$ \\
	(SIN,MUN) & $(e_{1} e_{3} + e_{2}e_{3} - e_{1}e_{2}e_{3}) \oplus (e_{1} \otimes e_{6}) \oplus (e_{2} \otimes e_{6}) $  & $0.6395$ \\
	\bottomrule
	\end{tabular}
	}
	\vspace{-6mm}
\end{table}

A newly added edge may result in generation of either new answers or more
derivations of an existing answer.  A new \emph{answer} generation is a simple
case of symbolic expression construction mentioned in
Sec.~\ref{subsec:symConstruction}. However, the accommodation of new
\emph{derivations} involves updating symbolic expressions and, thus, the
resulting probabilities of \emph{affected} answers. We use the existing system
\huka \cite{huka} to efficiently identify and supply new derivations to our
\name framework. Here, we focus on updating the affected symbolic expressions.

 Suppose, we add an edge $e_{6}(DEL,MUN,A_{1},0.2)$ to
the example KG shown in Fig.~\ref{fig:kg}. It would result in $2$ new
derivations, $D_{1}: \langle e_{1}, e_{6} \rangle$ and  $D_{2}: \langle e_{2},
e_{6} \rangle$ of answer \textit{(SIN,MUN)}. \name \emph{incrementally} updates the existing symbolic 
expression of an \emph{affected} answer by absorbing one
derivation at a time. With existing expression $e_{1}e_{3} + e_{2}e_{3}-e_{1}e_{2}e_{3}$, the updated expression would be $(e_{1}e_{3} + e_{2}e_{3}-e_{1}e_{2}e_{3}) \oplus e_{1}e_{6} \oplus e_{2}e_{6}$. The updated expressions and the probabilities of all the answers is reported in
Table~\ref{tab:updatedResult}.

Often queries specify a probability threshold to find answers with
probabilities above a certain threshold. In our example query, suppose the user
wants flight routes with overall probability greater than $0.5$. Thus, instead
of computing the exact probability after each edge insertion, we require a
mechanism to quickly filter out answers that cannot pass the threshold, after a
new derivation.  To this end, we devise a simple method based on the following observation. 

\begin{observation}
	\label{obs:optimize}
	An \emph{upper bound} on the probability computed by adding two symbolic
	expression $symE_{i}$ and $symE_{j}$ is
	{%\small
	\begin{align*}
		Pr &\left(symE_{i} \oplus symE_{j} \right) \\
		&\leq Pr(symE_{i}) + Pr(symE_{j}) - Pr(symE_{i}) \times Pr(symE_{j})
	\end{align*}
	}
\end{observation}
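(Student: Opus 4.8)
The plan is to exploit the defining formula for $\oplus$ together with Theorem~\ref{thm:symprob}, which guarantees that evaluating a flat polynomial at the edge probabilities yields the true probability of the underlying $PosBool$ event. First I would write $Pr(symE_i \oplus symE_j)$ by unfolding the definition $f \oplus g = f + g - (f \otimes g)$, so that upon evaluation we obtain $Pr(symE_i) + Pr(symE_j) - Pr(symE_i \otimes symE_j)$. Since $\oplus$ and $\otimes$ correspond under $H$ to $\vee$ and $\wedge$, this is exactly the inclusion-exclusion identity $Pr(E_i \vee E_j) = Pr(E_i) + Pr(E_j) - Pr(E_i \wedge E_j)$ for the events $E_i, E_j$ whose symbolic expressions are $symE_i, symE_j$. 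The claimed upper bound is therefore equivalent to the inequality
\begin{align*}
  Pr(symE_i \otimes symE_j) \geq Pr(symE_i) \times Pr(symE_j),
\end{align*}
i.e., $Pr(E_i \wedge E_j) \geq Pr(E_i) \cdot Pr(E_j)$.

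The heart of the argument is thus establishing this correlation inequality between the two events. The key observation is that both $E_i$ and $E_j$ are \emph{monotone} (positive) Boolean functions of the independent edge indicator variables $X_\bullet$, since every derivation expression is built purely from $\wedge$ and $\vee$ with no negation. For such positively-correlated monotone events over independent variables, the FKG / Harris inequality (or equivalently Chebyshev's sum inequality / the standard positive-correlation lemma for increasing functions) gives precisely $Pr(E_i \wedge E_j) \geq Pr(E_i) \, Pr(E_j)$. I would invoke this correlation inequality, noting that the edge independence assumption is exactly the hypothesis it requires.

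Alternatively, to keep the argument self-contained and purely symbolic, I would prove the polynomial inequality directly by induction on the shared variables. Writing $symE_i$ and $symE_j$ as multilinear (flat) polynomials over $\{p_\bullet\}$, one can condition on a single common variable $p_k$: setting $p_k = 0$ or $p_k = 1$ reduces the expressions to polynomials in fewer variables, and the statement $Pr(E_i \wedge E_j) \geq Pr(E_i)Pr(E_j)$ follows from the convexity that monotonicity in $p_k$ provides at each step. The main obstacle I anticipate is making this correlation step rigorous rather than intuitive: the inclusion-exclusion rewriting is mechanical, but justifying $Pr(E_i \wedge E_j) \geq Pr(E_i)Pr(E_j)$ requires either a clean appeal to the FKG inequality or a careful inductive conditioning argument that tracks monotonicity of the flat polynomials. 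Once that inequality is in hand, substituting it back into the inclusion-exclusion expression yields the stated bound immediately, since decreasing the subtracted term $Pr(symE_i \otimes symE_j)$ can only increase the whole expression.
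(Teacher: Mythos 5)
Your proposal is correct, but it supplies a proof where the paper offers none: the paper states Observation~\ref{obs:optimize} without argument, justifying it only by remarking that the bound ``entirely ignores the dependence among the different derivations'' and by pointing to the dissociation literature, where this quantity is known as the propagation score. Your route makes the folklore claim rigorous. The reduction is exactly right: unfolding $f \oplus g = f + g - (f\otimes g)$ and invoking Theorem~\ref{thm:symprob} turns the claim into the correlation inequality $Pr(E_i \wedge E_j) \geq Pr(E_i)\,Pr(E_j)$ for the $PosBool$ events $E_i, E_j$ underlying $symE_i, symE_j$; since these are positive (negation-free) Boolean functions of independent edge indicators, they are monotone increasing events on a product space, and the Harris/FKG inequality gives precisely the required positive correlation. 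This is a clean, correct argument, and it also identifies exactly where the edge-independence assumption enters and when the bound is tight (when the two expressions share no edge variables). Your fallback ``self-contained'' argument by conditioning on shared variables is the standard inductive proof of Harris's inequality, but as sketched it is the weak point --- ``the convexity that monotonicity provides'' is not yet a proof step; if you want to avoid citing FKG, you should carry out that induction explicitly (condition on one variable $p_k$, use monotonicity of both multilinear polynomials in $p_k$, and apply Chebyshev's sum inequality to the two-point distribution). In short: the paper buys brevity by deferring to known results on propagation scores; your approach buys a self-contained justification at the cost of importing (or reproving) a classical correlation inequality.
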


This observation gives an upper bound on the new probability value of an
answer.  This upper bound that entirely ignores the dependence among the
different derivations of an answer, referred to as propagation score~\cite{dissociation}, is often used for approximating the exact probability of answers.  The
upper bound computation is quite efficient as it involves straightforward
arithmetic computations.

Thus, for each \emph{affected} answer, \name follows a filter-and-refine
mechanism: it first computes the upper bound, and then decides if the updated
symbolic expression computation is needed at that point in time. For instance,
after the insertion of edge $e_{6}$ (probability $0.2$), the upper bound of
answer (DEL,BAR) and (DEL,JFK) are $0.563$ and $0.437$ respectively.  Thus, we
compute the exact probability of \emph{only} (DEL,BAR) as it passes the query
threshold $0.5$.

To ensure correct operation of \name to handle future edge update requests, we
cannot altogether avoid the updated symbolic expression computation.  We
defer it to improve the response time, i.e., the time taken to report the
updated answers of \emph{affected} queries. 

\subsubsection{Edge Probability Update}
\label{subsubsec:edgeProbUpdate}

An edge probability update operation affects \emph{only} the answer
probabilities and not their symbolic expressions (assuming the new probability
to be non-zero).  Thus, instead of re-evaluating the symbolic expression of an
affected answer from scratch, \name computes only the offsets corresponding to
the new probability.
Suppose, the probability of edge $e_{i}$ is updated from $Pr(e_{i})$ to
$Pr(e_{i})'$. First, \name fetches entry of $e_{i}$ in \textsf{edgeToSymE} to
get all the symbolic expressions in which $e_{i}$ participated.  Then, for each
such symbolic expression $symE_{j}$, it computes the offset. The \emph{offset}
is calculated by re-evaluating the monomials of the symbolic expression in
which $e_{i}$ appeared.  For instance, if the probability of $e_{2}$ is updated
to $0.6$, then symbolic expression of (SIN,MUN) would be fetched from
\textsf{edgeToSymE}. The symbolic expression has $2$ relevant monomials,
$e_{2}e_{3}$ and $e_{1}e_{2}e_{3}$, as shown in Table~\ref{tab:result}.  The
\emph{offset} is calculated as sum of valuations of all the relevant monomials.
These monomials are evaluated by assigning corresponding edge probabilities to
all involved edges except $e_{i}$. The updated edge $e_{i}$ variable is
substituted by value $Pr(e_{i})-Pr(e_{i})'$. Here, the relevant monomials
$e_{2}e_{3}$ and $e_{1}e_{2}e_{3}$ are evaluated to $0.06$ and $0.048$
respectively, with $e_{1}=0.8$, $e_{2} = 0.1$ ($Pr(e_{2})-Pr(e_{2})' = 0.1$)
and $e_{3}=0.6$.  Thus, the offset is $0.108$ ($= 0.06+0.048$). Finally, the
updated probability value the answer is given as \textit{newP} = \textit{oldP}
-- \textit{offset}, where \textit{oldP} is the probability of the answer before
the update.

\subsubsection{Deletion of an Edge}
\label{subsubsec:edgeDeletion}

When an edge gets deleted, the derivations in which it participated becomes
\emph{invalid} as they cannot generate the corresponding answer anymore. There
are two ways of computing the updated symbolic expression of $a_{i}$.  Suppose
$k$ out of $d$ derivations of an answer $a_{i}$ become \emph{invalid}.  The
first one involves iterating over the monomials of the current symbolic
expression and dropping off the monomials containing the deleted edge. The
other method involves removing the $k$ invalid derivations from the current
derivation list and recomputing the symbolic expression with this updated list
of derivations. As we will see in Sec.~\ref{subsec:analysis}, the number of
monomials in a symbolic expression constructed from $d$ derivations can be
exponential in $d$. When that bound is attained, the time involved in
manipulating an expression is as much, or more, than that for recomputing it
from the scratch, i.e., $O(2^{d-k})$. Therefore, we adopt the latter strategy.

\subsection{Analysis}
\label{subsec:analysis}

\begin{table}[t]
	\tabcaption{Parameters of a Query}
\centering
	\resizebox{\columnwidth}{!}{%
	\begin{tabular}{ll}
		\toprule
		\bf Parameter & \bf Description \\ 
		\midrule
		$s$ & Size of query, i.e., number of triple patterns \\
		$d$	& Number of derivations of an answer \\
		$n$ & Total number of edges across all $d$ derivations of an answer \\
		$m$ & Number of monomials in a symbolic expression (flat polynomial) \\ \bottomrule
	\end{tabular}
	}
\label{tab:symbols}
%\vspace{-6mm}
\end{table}

We start by introducing the different parameters that characterize the queries and
their answers. We refer to the number of triple patterns in a query as
\emph{query size}, $s$.
\comment{Our running query is of size $s=2$.}
An answer
$a_{k}$ of an query can have $d$ derivation terms, each denoted as $D_{i}$, $1 \leq
i \leq d$.
Further, let the number of edges involved across all $d$ derivations be
$n$. \comment{Here, in our example, answer \textit{(SIN,MUN)} has $2$ derivations which
involves a total of $3$ edges, i.e., $d=2$, $n=3$.}  On simplifying the
derivation expression $dervE_{a_{k}}$ of $a_{k}$, we get the corresponding
symbolic expression $symE_{a_{k}}$. The number of monomials in the symbolic
expression (\emph{flat} polynomial) is denoted as $m$.  \comment{Effectively, a symbolic
expression is a flat polynomial over $n$ variables containing $m$ monomials.
\comment{For instance, the symbolic expression $e_{1}e_{3} + e_{2}e_{3} -
e_{1}e_{2}e_{3}$ of answer \textit{(SIN,MUN)} is a flat polynomial over
$\{e_{1},e_{2},e_{3}\}$ and has $3$ monomials.}} 
Table~\ref{tab:symbols} summarizes the parameters.

The cost of construction of symbolic probability expressions involves
simplifying a derivation expression to a polynomial in $Z[e_1,\ldots,e_n]$ by
using the definitions of semiring operations $\otimes$ and $\oplus$. 
Let $f$ and $g$ be two flat polynomials and $|\bullet|$ be the number
of distinct monomials in the simplified form of polynomial
$\bullet$. Then, the bound on the number of distinct monomials is
{%\small
\begin{align}
  |f\otimes g| &\le |f|\times|g| \\
  |f \oplus g| &\le |f| + |g| + (|f|\times|g|) \label{eq:plussing}
\end{align}
}
The time taken to compute $f \times g$ is $O(|f| |g| (\log|f|+\log|g|))$, and
that for $f + g$ is $O(\min\{ |g| \log|f|, |f| \log|g| \})$.  The logarithmic
terms correspond to searching for new entrants among existing monomials.

We compute the symbolic probability expressions incrementally by adding
terms corresponding to one derivation at a time. In other words, at
step $i$, derivation $D_{i}$ is added ($\oplus$) with the resultant
symbolic expression of step $i-1$.  If $m_{i}$ is the size (number of
monomials) of the symbolic expression obtained after the $i^\text{th}$
step, we have the recursive bound
$m_{i+1} \le 2 \cdot m_i + 1$
with $m_1 = 1$. Therefore,
$m_d \le 2^d$.
%{\small $$m_d \le 2^d.$$}

Notice that each update step can be analyzed as a
special case of Eq.~\eqref{eq:plussing} with $|g| = 1$.  Given the bound
on $m_d$, the update step ($(d+1)^\text{st}$ step) takes $O(d \cdot 2^d)$
time.  Thus, the total computation time is also $O(d \cdot 2^d)$ in the
worst case.

Interestingly, we can get other bounds on $m_d$ as well. The symbolic
computation is a flat polynomial on $n$ variables. Therefore, it can have at
most $2^n$ monomials. The update step takes $O(n \cdot 2^n)$ time and the total
computation time is $O(d \cdot n \cdot 2^n)$ in the worst case.

Towards another bound, let us view monomials as sets of
variables. After either semiring operation, $\otimes$ or $\oplus$,
every monomial in the resultant polynomial is a superset of a monomial
from the polynomials operated upon. Therefore, every monomial in our
symbolic probability expression is a superset of a monomial
corresponding to a single derivation. In cases where the smallest
number of variables in a derivation (say, $t$) is large compared to
the total number of variables, the bound on the total number of
monomials, $\Sigma_{i=t}^n{{n}\choose{i}}$, is much better than
$2^n$. In case where $n-t$ is very small, this bound degenerates to
$O(n^{n - t})$ and the update step time is $O(n^{n - t + 1})$. To
locate answers where this bound takes effect, we found that using $s$
as a proxy for $t$ works well in practice.

In the case where these $n$-dependent bounds are smaller than the
$d$-dependent bounds, the update time is markedly lesser compared to
the total compute time.  For example, if $n-t$ is very small and $n$
does not change with incoming derivations, the time taken for each
subsequent update step is $O(n^{n - t + 1})$.  This becomes a
successively smaller fraction of the total computation time as more
derivations come in. This is in contrast to exponential time update
steps, where the time taken for each subsequent update step is the
same fraction of the total compute time till that point. These are the
answers where we expect the most advantage from our maintenance
algorithm.

\section{Experiments}
\label{sec:expts}

Query processing over probabilistic KG involves two tasks: finding the answers
	of a posed query, and computing the probability of each answer of the query
	along with its provenance. Many systems such as \huka \cite{huka},
	\tripleprov \cite{tripleProv}, \provsql \cite{provsql} compute the graph
	query along with provenance polynomials over deterministic data. Our \name
	framework can be plugged into any of them.  We have used \huka as the base
	system.  In this section, therefore, we focus on the performance of the
	probability computation task starting from the derivation lists of the
	query answers.  There is another important dimension of \name, that of
	maintainability since it is quite common for KGs to undergo changes.
	Therefore, we also test our system on maintenance time under these
	operations against the complete re-computation cost of the symbolic
	expressions.

\subsection{Setup}

\subsubsection*{Datasets}

We consider two widely used benchmark datasets in our experimental evaluations:
%
%\begin{enumerate}
%
%	\item
		(a)~\textbf{YAGO2} \cite{yago2,hoffart2011yago2}, an automatically
		built ontology gathering facts from different sources like
		Wikipedia,
		GeoNames,
		etc. It has
		$\sim 23$\,M facts over $\sim 5.8$\,M real-world entities.
		%It has around $78$ unique edge labels.
%
%	\item
		(b)~\textbf{gMark} \cite{gmark}, a synthetic dataset generated by
		a schema-driven data and workload generator, \gmark. We used the schema
		of LDBC SNB~\cite{ldbc}
		%supplied with the generator,
		to generate a graph with $0.9$\,M nodes and $2.2$\,M edges.
		%We refer to this dataset by the name of the generator.
%
%\end{enumerate}	

\subsubsection*{Query Collection}

For the \yago dataset, we used a set of queries on which the RDF-3X was
originally validated~\cite{neumann2010rdf}. We chose $3$ out of $6$ benchmark
queries since the other queries have answers with only a single derivation. The
probability of answers with a single derivation can be computed simply by
multiplying the probabilities of edges involved. This is a corner case that
does not serve to make any comparison.  The $3$ chosen queries are fairly large
and complex, and have $7.25$ triple constraints on an average.
% Further, we use the method of removing one edge at a time to generate total
% $26$ subqueries of those $4$ parent queries.
%
For the \gmark dataset, we generated queries of size between $3$
and $7$. We generated $100$ queries out of which $11$ have answers with
multiple derivations. The average query size of these $11$ queries is
$4.24$ triples.

The statistics of the datasets are reported in Table~\ref{tab:yagoQ} and
Table~\ref{tab:gmarkQ}.  Interestingly, given the size of \gmark KG, queries of
even size $7$ have quite large ($\approx 6000$) answer sets. This is due to the
fact that none of the generated queries have bound variables, i.e., for all the
triple patterns, both subject and object are variables.  Since each answer is
dealt with independently, the variation across this large number of answers
helps us to evaluate the methods thoroughly.

\subsubsection*{Implementation}

We conducted all our experiments on a 32-core 2.1GHz CPU,
512GB RAM machine with 1TB hard drive. Our implementation is single-threaded in Java. Our codebase is publicly available \footnote{\url{https://github.com/gaurgarima/HaPPI}}. The KGs \yago and \gmark are realized as Neo4j property graphs of size $11$\,GB and $211$\,MB respectively. The inverted indexes
\textsf{edgeToSymE} and \textsf{symEval} are of size $2.4$\,MB and $2.3$\,MB respectively for \yago, and $805$\,MB and $804$\,MB
respectively for \gmark. 

\subsubsection*{Choice of Baseline Systems}

For baseline system selection, we focused on only \emph{intensional} technique
based systems since \emph{extensional} techniques cannot handle
\emph{self-joins} (Sec.~\ref{subsec:probWork}). We chose a recently proposed
system \provsql~\cite{provsql} that implements \textsf{PosBool} semiring based
probabilistic database on top of PostgreSQL.
%Choice of baseline systems are motivated by the probability computation
%techniques used in \provsql~\cite{provsql}. 
It relies on $3$ standard ways to compute the probability: (a)~possible world
computation, (b)~knowledge compilation, and (c)~Monte Carlo technique.  We have
not considered the Monte Carlo approach as, unlike the other two approaches, it
computes approximate probabilities.  
%For knowledge compilation, we have used a publicly available compilation tool,
%\ctd\footnote{\url{http://reasoning.cs.ucla.edu/c2d/download.php}}~\cite{c2d}.
%We could not use the latest compilation tool
%\textsf{D4}\footnote{\url{http://www.cril.univ-artois.fr/KC/d4.html}}~\cite{d4}
%as its authors confirmed that the current output form of the tool is unusable. 
We adopted the implementation of possible world (\pw) and knowledge compilation
(\ctd) used in publicly available \provsql as our baselines.

\begin{table}[t]
	\centering
	\tabcaption{Details of answer set of \yago queries.}
%	\resizebox{\columnwidth}{!}{%
	\footnotesize{
		\begin{tabular}{ccccccc} 
			\toprule
			\bf Query & \bf Query & \bf Size of & \multicolumn{2}{c}{\bf \#Derivations, $d$} & \multicolumn{2}{c}{\bf \#Distinct edges, $n$}  \\ 
			%\cline{4-7}
			\bf Id & \bf Size, $s$ & \bf Answer Set & min & max & min & max \\
			\midrule
			$Q_{1}$ & 6 & 2 & 2 & 2 & 9 & 9  \\
			$Q_{3}$ & 4 & 728 & 2 & 14 & 6 & 30  \\
			$Q_{6}$ & 6 & 544 & 2 & 5 & 8 & 14  \\
			\bottomrule
		\end{tabular}
	}
	\label{tab:yagoQ}
\end{table}

\begin{table}[t]
%	\centering
	\tabcaption{Details of answer set of \gmark queries.}
	\footnotesize{
		\begin{tabular}{ccccccc} 
			\toprule
			\bf Query & \bf Query & \bf Size of & \multicolumn{2}{c}{\bf \#Derivations, $d$} & \multicolumn{2}{c}{\bf \#Distinct edges, $n$} \\ 
			%\cline{4-7}
			\bf Id & \bf Size, $s$ & \bf Answer Set & min & max & min & max  \\
			\midrule
			$Q_{4}$ & 7  & 6048 & 5 & 15 & 8 & 9  \\
			$Q_{6}$ & 4 & 1536 & 2 & 87 & 5 & 69  \\
			$Q_{9}$ & 5 & 88 & 2 & 25 & 7 & 52  \\
			$Q_{21}$ & 5 & 411 & 6 & 126 & 6 & 9  \\
			$Q_{23}$ & 3 & 2325 & 2 & 122 & 4 & 124  \\
			$Q_{32}$ & 3 & 32 & 2 & 2 & 6 & 6  \\
			$Q_{35}$ & 3 & 10 & 2 & 2 & 4 & 5  \\
			$Q_{38}$ & 3 & 4929 & 2 & 257 & 4 & 620  \\
			$Q_{46}$ & 7  & 6156 & 2 & 24580 & 8 & 8681  \\
			$Q_{54}$ & 6 & 400 & 2 & 6 & 7 & 16 \\
			$Q_{90}$ & 6 & 62 & 4 & 35 & 7 & 10 \\
			\bottomrule
		\end{tabular}
	}
	\label{tab:gmarkQ}
	%\vspace{-6mm}
\end{table}

\subsection{Probability Computation Time}
\label{sec:computation}

The \textit{probability computation time} is the time taken to compute the
probability of query answers given a list of their derivation(s).  Variance in
probability computation time across query answers happens due to the following
characteristics: (i) $d$, the number of derivations, (ii) $n$, the total number
of edges involved, and (iii) $s$, the size of query.  We call this triple $(d,
n, s)$ the \emph{answer signature}.

\subsubsection*{Performance across Answer Signatures}

Since answer signatures across a query show wide variations,
we try to understand the trends by grouping the answer signatures
into buckets.  We expect, from
Sec.~\ref{subsec:analysis}, that \name scales exponentially
with $d$. Similarly, \pw scales exponentially with $n$.  Hence,
query answers are first grouped on the basis of
$d$, and then grouped further on the basis of $n$.
The bucket boundaries are chosen such
that the variation within a bucket is not very high.
Table~\ref{tab:baselineyago} and Table~\ref{tab:baselinegmark} show
the detailed results across these buckets. 
The \emph{count} column shows the number of query answers each
signature bucket has.
%The computation times of the systems are shown with their standard deviations.
While the
performance of \name can also depend on query size $s$, this
dependence shows up for a very small set of answers (when $s$ is
almost as large as $n$).  Hence, we do not show $s$ in the tables.

%\ab{write about bucket boundaries. also, write the line about s and n-s.}

For \yago queries, it can be seen that \name massively outperforms both the
systems.  The largest absolute time for any derivation is only $130\mu$s for
\name.  This is because $d$ and $n$ are not very large.  As expected, \pw shows
an exponential scaling with $n$.  \ctd shows a flat trend across buckets but a
high variation within them. Knowledge compilation based techniques are
sensitive to the precise Boolean formula whose probability is being computed
and not just to its size parameters. \comment{While there should still be a
general trend of computation time increase with size of answer parameters, this
asymptotic trend does not show up within the range of our data.}
The trend across \gmark queries is more interesting.  Up to $d = 12$ and $n <
18$, \name performs very well.  When $n \geq 18$ (for $d \in [9,12]$), the time
for \name shoots up to more than $4$ms.  While \pw could not finish even after
$30$s, \ctd is faster than \name in this range.  When $d$ is even larger ($>
13$), the time for \name jumps to $41$ms.  \ctd remains more or less constant
in the range of $4$ms.

%Table~\ref{tab:baselineyago} and Table~\ref{tab:baselinegmark} confirm these trends.

\begin{table}[t]
	\tabcaption{Probability computation times ($\mu$seconds): \yago.}
	\resizebox{\columnwidth}{!}{
		\begin{tabular}{cc|c|c|c|c}
			\toprule
			$\mathbf{d}$ & $\mathbf{n}$ & \bf Count &	\bf \name	&	\bf \pw	&	\bf \ctd 	\\
			\midrule
			$[2,5)$ & $-$ &	1,225 &	19.22 $\pm$	11.07 &	444.72 $\pm$	518.29 &	2,720.25 $\pm$	1,723.80 \\
			\cmidrule(lr){1-6}			
			\multirow{2}{*}{$[5,9)$} & $< 13$ &	7 &	84.61 $\pm$	35.62 &	2,624.87 $\pm$	359.38 &	1,709.67 $\pm$	530.58 \\
			& $\geq 13$ &	39 &	130.59 $\pm$	87.01 &	19,265.49 $\pm$	19,102.84 &	2,233.88 $\pm$	1,636.08 \\
			\bottomrule
		\end{tabular}
	}
	\label{tab:baselineyago}
\end{table}

\begin{table}[t]
	\centering
	\tabcaption{Probability computation times ($\mu$seconds): \gmark.}
	\resizebox{\columnwidth}{!} {
		\begin{tabular}{cc|c|c|c|c}
			\toprule
			$\mathbf{d}$ & $\mathbf{n}$ & \bf Count &	\bf \name	&	\bf \pw	&	\bf \ctd 	\\
			\midrule
			\multirow{2}{*}{$[2,5)$} & $< 6$ & 4,558 & 5.69 $\pm$ 3.05 & 10.88 $\pm$ 5.74 & 4,811.58 $\pm$ 3,200.32 \\
			&$\geq 6$& 4,321 & 16.95 $\pm$ 10.44 & 1,564.73 $\pm$ 3,843.17 & 3,827.67 $\pm$ 3,101.51 \\ 
			\cmidrule(lr){1-6}			
			\multirow{2}{*}{$[5,9)$} & $< 13$ & 5,523 & 87.67 $\pm$ 64.81 & 900.61 $\pm$ 918.17 & 4,202.18 $\pm$ 2,784.27 \\ 							
			& $\geq 13$ & 673 & 249.31 $\pm$ 171.56 & 335,246.76 $\pm$ 738,398.70 & 3,367.52 $\pm$ 2,834.69 \\
			\cmidrule(lr){1-6}			
			\multirow{2}{*}{$[9,13)$} & $< 18$ & 360 & 265.38 $\pm$ 183.52 & 353,521.69 $\pm$ 761,014.83 & 3,418.17 $\pm$ 2,890.34 \\
			& $\geq 18$ & 439 & 4,684.11 $\pm$ 3,300.08 & time-out & 3,758.99 $\pm$ 2,937.08 \\ 
			\cmidrule(lr){1-6}			
			\multirow{1}{*}{$[13,16)$} & $-$ & 2,415 & 41,697.01 $\pm$ 32,091.19 & time-out & 3,924.66 $\pm$ 3,176.96 \\
			\bottomrule						
		\end{tabular}
	}
	\label{tab:baselinegmark}
\end{table}

\begin{figure}[t]
	\centering
	\includegraphics[scale=\figscale]{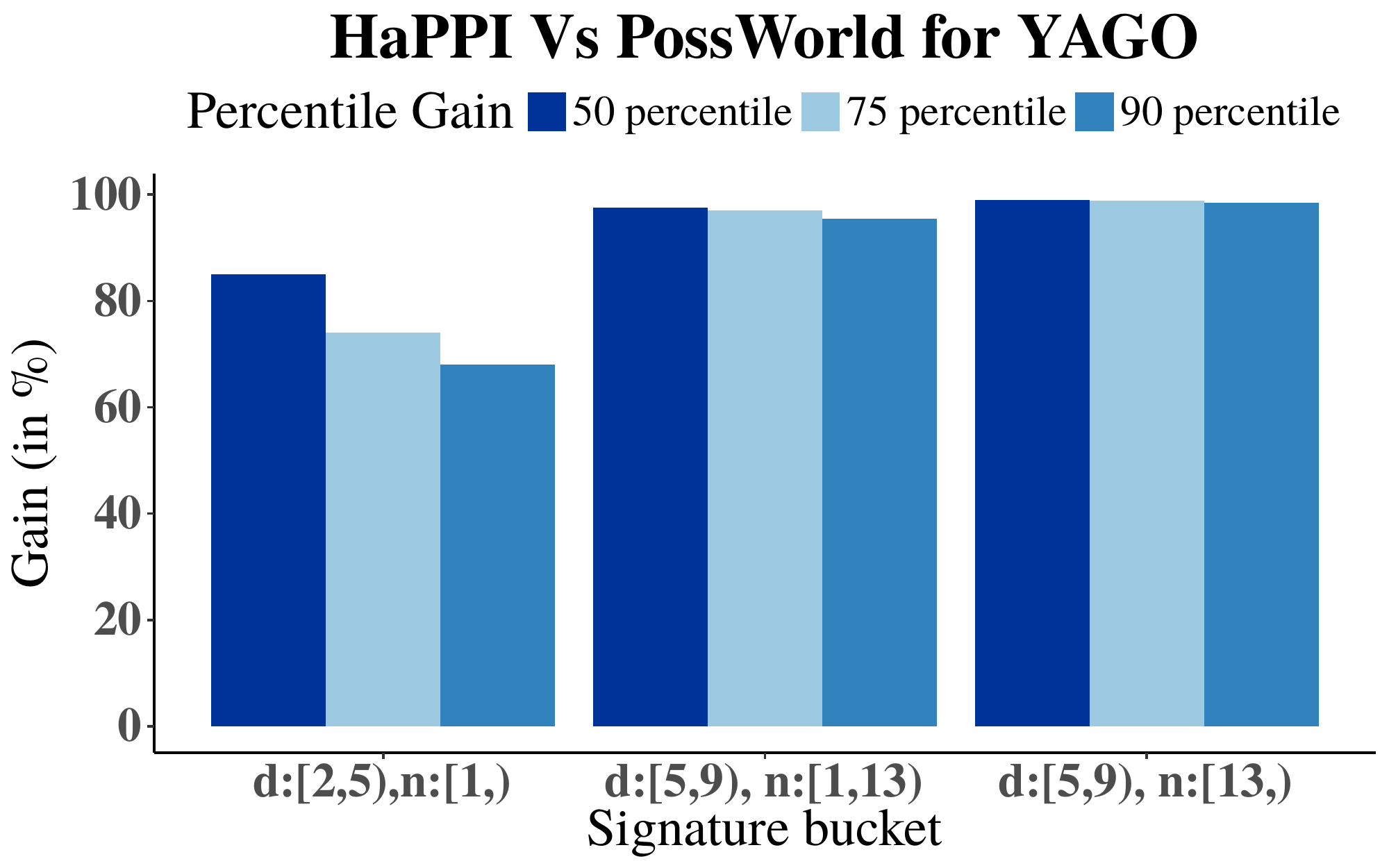}
	\includegraphics[scale=\figscale]{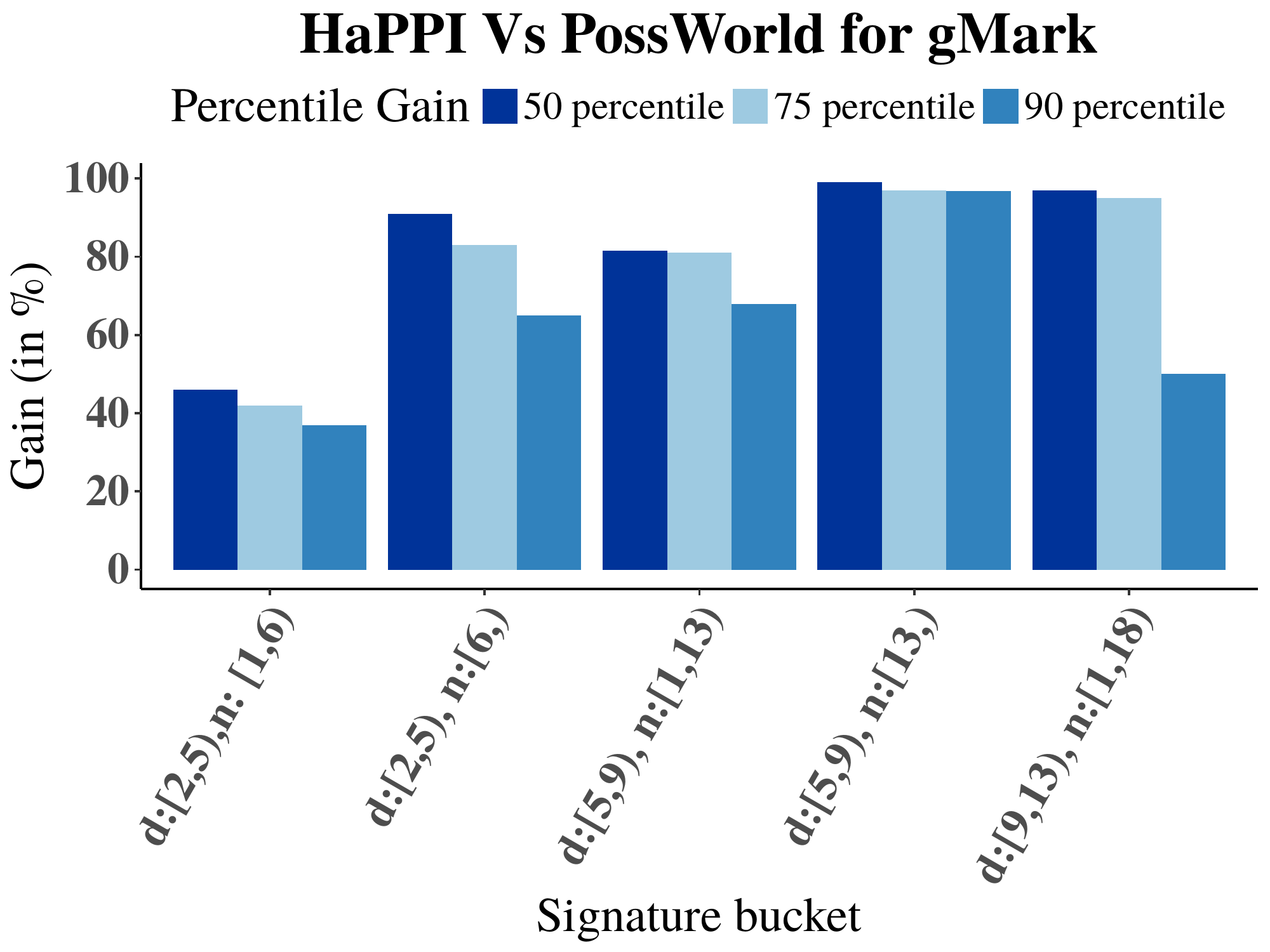}
\figcaption{Percentile gains of \name against \pw.}
\label{fig:percentilePW}
\end{figure}

\begin{figure}[t]
	\centering
	\includegraphics[scale=\figscale]{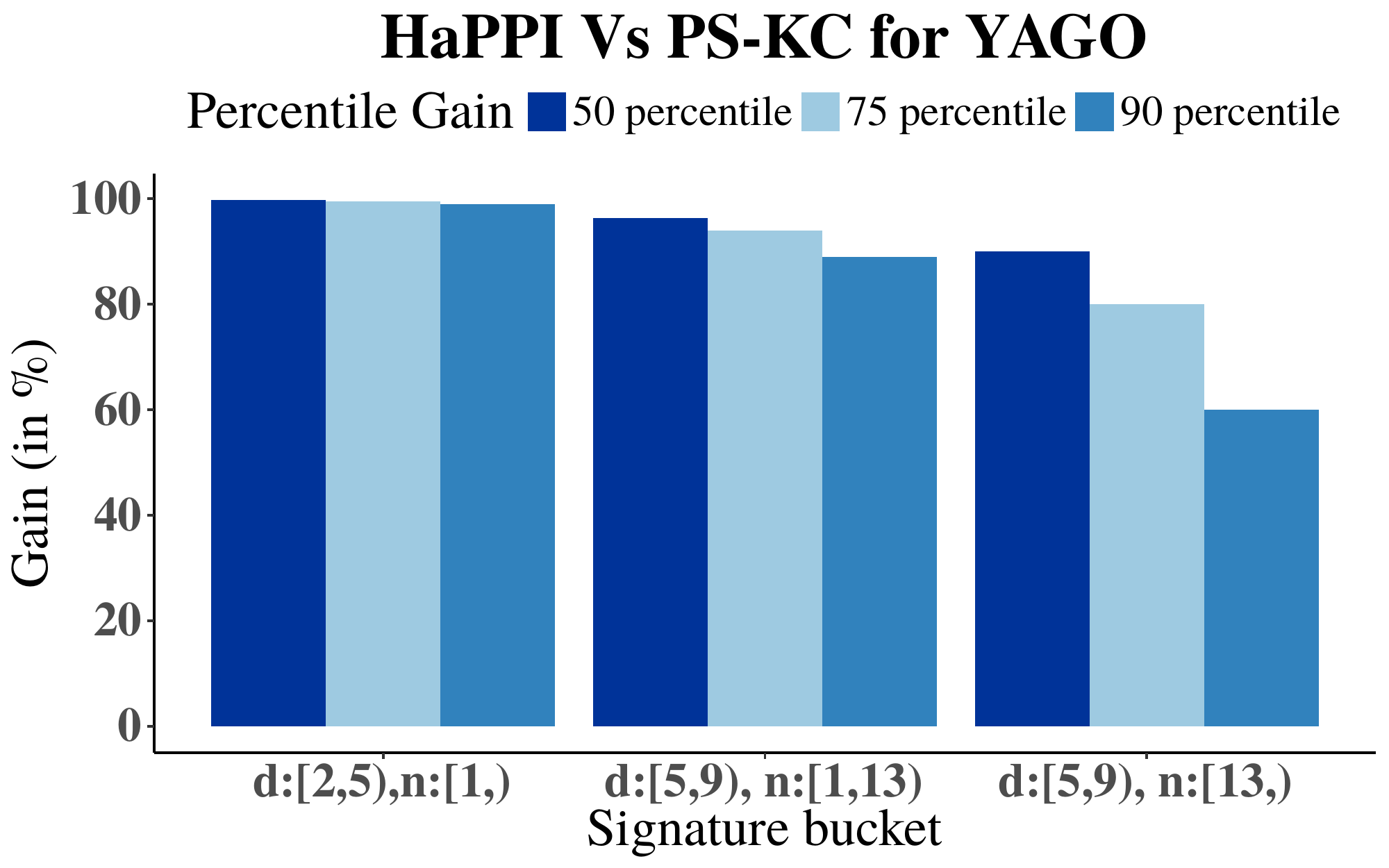}
	\includegraphics[scale=\figscale]{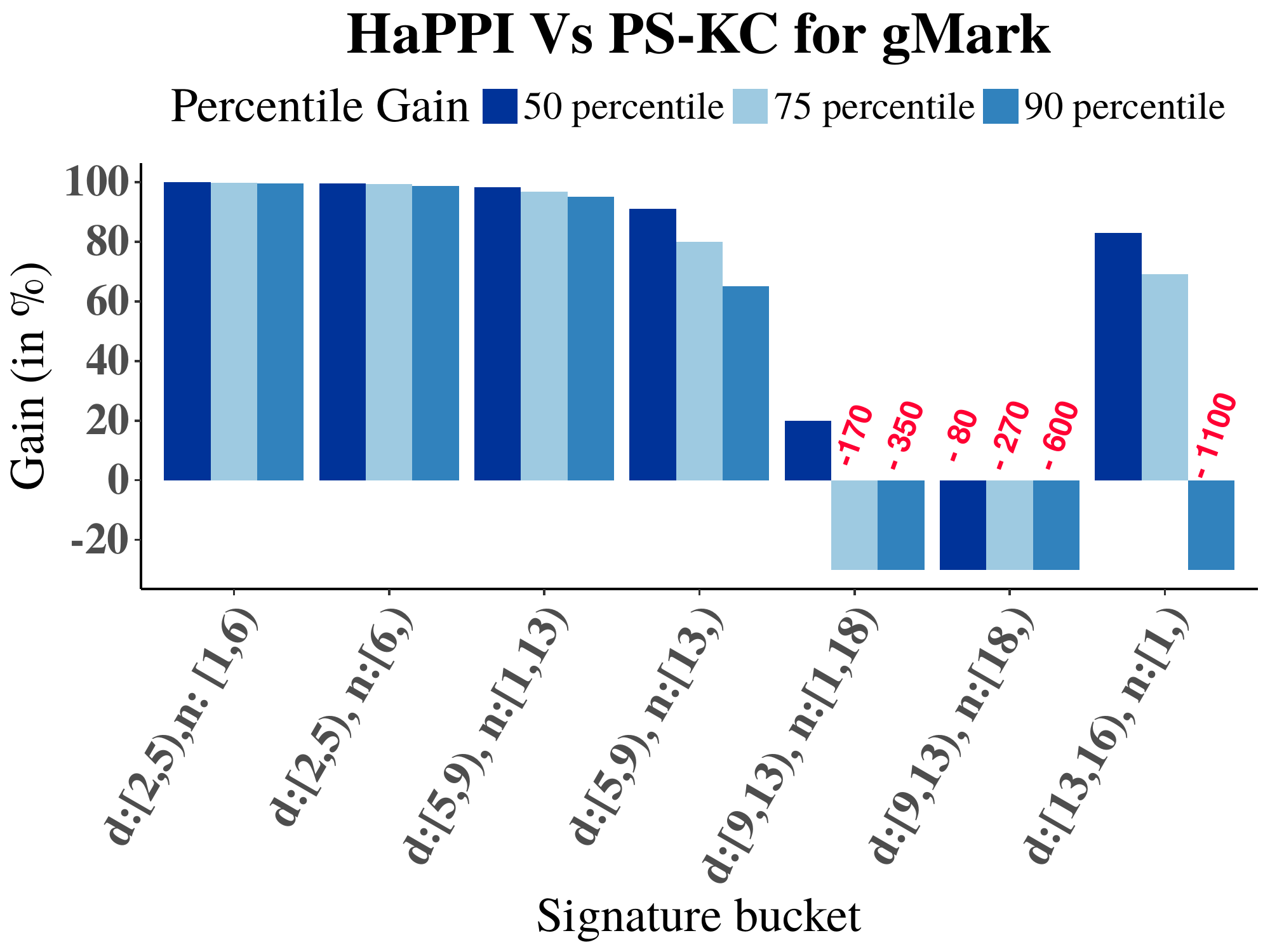}
	\figcaption{Percentile gains of \name against \ctd.}
	\label{fig:percentileC2D}
\end{figure}

\subsubsection*{Percentile Gains}

We dig deeper to analyze the gains of \name over the other two systems in more
detail. Timing averages of exponential systems tend to be dominated by corner
cases. To guard against that, and to get a better insight, we employ a new
metric for comparison based on \emph{percentile gains}.  By reporting a
\emph{$p^\text{th}$ percentile gain} of $x\%$, we mean that for $p\%$ of the
answers, \name took at least $x\%$ lesser time than the method compared
against.  The gain is computed using the following ratio: %\\
%
%${} \qquad \quad \textbf{gain}_\text{method} =
%(\text{time}_\text{method}-\text{time}_\name)/\text{time}_\text{method}$ %\\
$$\text{gain}_\text{method} =
(\text{time}_\text{method}-\text{time}_\name)/\text{time}_\text{method}$$ %\\
For instance, Fig.~\ref{fig:percentilePW} reports a $50^\text{th}$ percentile
gain of $82\%$ for \name over \pw for the bucket $d \in [2,5)$, $n \geq 1$.
This means that for $50\%$ of the answers in this bucket, the probability
computation time of \name is at most $100 - 82 = 18\%$ of the corresponding
time of \pw.  A negative gain indicates that \name is slower.  \comment{ For
example, if for a bucket having $100$ answers, the $75^\text{th}$ percentile
gain of \name is $80\%$, this means that for $75$ answers, the probability
computation of \name is at most $100 - 80 = 20\%$ of the corresponding time of
the compared method.  Note that if \name is slower, gain is negative.  }
%Fig.~\ref{fig:percentilePW} shows the percentile gains of \name over \pw.
\name shows superior performance across all signature buckets. \comment
{\ab{The gains displayed are lesser than those the averages suggest. This is to
be expected, because timing averages of exponential systems tend to be
dominated by the corner cases. However, the sizable gains even for the $90$th
percentile showcase consistent and marginal better performance for individual
answers.} \ab{Not clear -- can we cut?}}
Fig.~\ref{fig:percentileC2D} shows that the better performance of \name over
\ctd for small answer signatures is uniform inside the buckets with
little deviation even for corner cases. The asymptotic deterioration of
\name is also visible. There is an exception to this trend. We discuss
the unexpected gains in the bucket $d\in[13,16)$, $n \geq 1$ in
Sec.~\ref{sec:adaptive}.

\subsection{Adaptive Framework}
\label{sec:adaptive}

While \name outperforms \ctd and \pw significantly for lower ranges of $d$ and
$n$, it is quite slow for larger values.
%For a query, however, bulk of the query answers in both the datasets lie in
%the lower ranges.
Since \ctd takes roughly the same amount of time to compute probabilities for
the very smallest of query answers to the largest, an \emph{adaptive} strategy
involving both \name and \ctd seems to be the best.  The \emph{adaptive}
strategy utilizes the best of both the worlds: it employs \name for lower
ranges of $d$ and $n$, and switches to \ctd when these parameters become large.
We report $3$ bounds, $O(2^{d})$, $O(d \cdot 2^{n})$ and
$O(n^{s-n})$, on the probability computation of \name
(Sec.~\ref{subsec:analysis}). We, thus, expect to use \name for small values of
$d$, $n$ or $s-n$.

To notice the $s-n$ dependence, we need to look at higher values of $d$ and
$n$, since \name would outperform \ctd anyway for smaller values.  For the
range $13 \le d < 16$, this is precisely what happens.  Closer inspection
showed that $2089$ out of $2415$ answers in this range had signatures $(5, 21,
7)$ or $(7, 15, 9)$ and \name outperformed \ctd for all these answers.  This
explains the aberrant percentile gain of \name. It also empirically validates
our analysis on $s-n$ values. 

We now elaborate on our \emph{adaptive} strategy.  For each answer, we
use \name if one of these $3$
conditions is satisfied: (a)~$d < 10$, (b)~$n < 8$, or (c)~$n - s < 3$.
\comment{
\gaur{change the enumeration to inline}

\begin{enumerate}[label=(\alph*)]
	\item $d < 10$, or
	\item $n < 8$, or
	\item $n-s < 3$.
\end{enumerate}
}
We refer to this signature range as the \name domain of answers.
We employ \ctd outside this domain. 

The overall probability computation time of \emph{adaptive} (\name/\ctd) and
pure \ctd techniques for \yago and \gmark queries are reported under column
`Computation Time' in Table~\ref{tab:yagooverall} and
Table~\ref{tab:gmarkoverall} respectively.  We have also reported the
percentage of query answers for which \name was used in the adaptive system.
For \yago queries, our adaptive approach is on an average $303$ times faster
than pure \ctd.  For \gmark queries, we record an average speed-up of $317$
times. The speed-up is, as expected, very high for queries $Q_{32}, Q_{35},
Q_{54}$, where \name is used for all the query answers.  Importantly, the
adaptive system gives a $21\%$ speed-up for even $Q_{46}$ where \name is used
for only about $37\%$ of the answers and the answer signatures go up to
$d=24580$ and $n=8681$.

%With this analysis, we propose to employ the above mentioned adaptive
%technique to handle queries.

% Yago with maintenance

\begin{table}[t]
\tabcaption{Performance of adaptive system (in $\mu$seconds) for query computation and maintenance for \yago query set.}
%Comparison of Adaptive versus \ctd total query processing time (in $\mu$second) of \yago.}
	\centering
	\resizebox{\columnwidth}{!}{%
	\begin{tabular}{c|r||ll|l||ll|l}
		\toprule
		\multirow{2}{*}{\bf QId} & \multicolumn{1}{|c||}{\bf \name}		
		& \multicolumn{2}{|c|}{\bf Computation Time} & \multirow{2}{*}{\bf Gain} & \multicolumn{2}{|c|}{\bf Maintenance Time} & \multirow{2}{*}{\bf Gain} \\ 		
		%\cmidrule{3-4} \cmidrule{6-7}
		& \multicolumn{1}{|c||}{\bf answer \%} & \bf Adaptive & \bf \ctd & & \bf Recompute & \bf Incremental &  \\
		\midrule
	$Q_{1}$	& $100.00$ & $2.7\times 10^2$  & $6.9\times 10^4$ & $0.99$ & $6.99\times 10^2$ & $6.29\times 10^2$ & $0.100$\\
	
		$Q_{3}$ & $99.86$ & $2.1\times 10^4$ & $1.4\times 10^6$ & $0.98$ & $3.26\times 10^4$ & $2.43 \times 10^4$ & $0.256$ \\
		
		$Q_{6}$	& $100.00$ & $3.9\times 10^3$ & $2.3\times10^6$ & $0.99$ & $1.44\times 10^4$  & $1.11 \times 10^4$ & $0.226$\\ 		
		\bottomrule
	\end{tabular}
	}
	\label{tab:yagooverall}
	%\vspace{-6mm}
\end{table}

%With maintenance time

\begin{table}[t]
	\tabcaption{Performance of adaptive system (in $\mu$seconds) for query computation and maintenance for \gmark query set.}
	%Comparison of Adaptive versus \ctd total query processing time (in $\mu$seconds) of \gmark}
	\centering
	\resizebox{\columnwidth}{!}{
	\begin{tabular}{c|r||ll|l||ll|l}
		\toprule
		\multirow{2}{*}{\bf QId} & \multicolumn{1}{|c||}{\bf \name}		
		& \multicolumn{2}{|c|}{\bf Computation Time} & \multirow{2}{*}{\bf Gain} & \multicolumn{2}{|c|}{\bf Maintenance Time} & \multirow{2}{*}{\bf Gain} \\ 		
		%\cmidrule{3-4} \cmidrule{6-7}
		& \multicolumn{1}{|c||}{\bf answer \%} & \bf Adaptive & \bf \ctd & & \bf Recompute & \bf Incremental &  \\
		\midrule
		$Q_{4}$ & $100.00$ & $5.5\times 10^5$ & $1.0\times 10^7$ & $0.94$ & $6.2\times 10^5$ & $5.6\times 10^5$ & $0.097$\\
		
		$Q_{6}$ & $91.47$ & $4.1\times 10^5$ & $3.9\times 10^6$ & $0.89$ & $4.7\times 10^5$ & $4.4\times 10^5$ & $0.060$\\
		
		$Q_{9}$ & $94.32$ & $1.5\times 10^4$ & $2.6\times 10^5$ & $0.94$ & $2.0\times 10^4$  & $1.8\times 10^4$ & $0.095$\\
		
		$Q_{21}$ & $98.78$ & $6.8 \times 10^4$ & $3.5\times 10^5$ & $0.81$ & $1.4 \times 10^5$ & $1.3\times 10^5$ & $0.091$\\
		
		$Q_{23}$ & $98.88$ & $1.1\times 10^5$ & $8.8\times 10^6$ & $0.98$  & $9.1\times 10^4$ & $7.8\times 10^4$ & $0.14$ \\
		
		$Q_{32}$ & $100.00$  & $1.5\times 10^2$ & $2.4\times 10^5$ & $0.99$&  $1.5\times 10^3$ & $1.3\times 10^3$ & $0.12$ \\
		
		$Q_{35}$ &  $100.00$ & $5.1\times 10^1$ & $8.0\times 10^4$ & $0.99$ & $7.1\times 10^2$ & $6.0\times 10^2$ & $0.16$\\
		
		$Q_{38}$ &  $92.45$ & $2.1\times 10^6$ & $2.1\times 10^7$ & $0.90$ & $2.3\times 10^6$  & $2.2\times 10^6$ & $0.05$ \\
		
		$Q_{46}$ &  $36.92$ &$3.6\times 10^7$ & $4.6\times 10^7$ & $0.21$ & $3.7\times 10^7$  & $3.6\times 10^7$ & $0.005$\\
		
		$Q_{54}$ &  $100.00$  &$4.7\times 10^3$ & $8.5\times 10^5$ & $0.99$  & $1.7\times 10^4$  & $1.3\times 10^4$ & $0.24$ \\
		
		$Q_{90}$ &  $52.27$ & $1.2\times 10^5$ & $1.5\times 10^5$ & $0.20$ & $1.2\times 10^5$ & $1.2\times 10^5$ & $0.008$\\
		\bottomrule
	\end{tabular}
	}
	\label{tab:gmarkoverall}
\end{table}

\subsection{Maintenance Time}
\label{sec:maintenance}

Next we evaluate the maintainability of \name. Notice that in our maintenance
algorithms (Sec.~\ref{subsec:maintain}) we make use of the incremental build-up
of the symbolic probability expressions in our semiring only for the case of
edge addition. When an edge is deleted we simply recompute the probability
expression from scratch. Similarly, updating the probability value for an edge
does not affect the already computed symbolic expression for the probability of
any answer. Thus, we focus on handling edge insertions in the KG.

\subsubsection*{Addition of an Edge}

When quoting maintenance times for our adaptive system we use \name
incrementally on the answers that lie in its domain and do a full
re-computation with \ctd on the rest. This is compared against
re-computation with the adaptive system (\name and \ctd in their
respective domains) for all answers.  We investigated the effect of
the addition of an edge on each individual answer. For every answer,
we randomly selected an edge that affects the answer. The sum total of
the times taken to incrementally update individual answers
(Incremental) for each query is reported in
Table~\ref{tab:yagooverall} and Table~\ref{tab:gmarkoverall}. This is
contrasted against the sum total of the re-computation time
(Recompute) for each individual answer under the specific edge
addition. \comment{We also report the average number of derivations
  added to an answer for every query. These numbers are reported for
  the \name domain of answers in the \emph{adaptive} strategy outlined
  in Sec.~\ref{sec:adaptive}.}

We also report, in the same tables, the percentage of answers for the
entire query that are computed using \name. Since \ctd is not
maintainable, the necessitated re-computation on the answers for which
the adaptive system uses \ctd markedly pulls down the overall gain for
queries with large answers. While we got $10$-$25\%$ gains for queries
that use \name for all their answers, we get more than $5\%$ gains for
queries where at least $90\%$ of the answers are computed by
\name. Thus, the proposed adaptive system inherits the maintainability
of \name.

We also investigated the maintainability of \name. We compared, on the
\name domain of answers, incremental time (IncrTime) versus
re-computation (RecompTime) with \name in Table~\ref{tab:insertyago}
and Table~\ref{tab:insertgmark}.  Notice that addition of an edge can
result in a variable number of derivations being added to an
answer. Since the incremental algorithm adds one derivation at a time
to our symbolic probability expression, its gains over re-computation
go down as the average number of added derivations per answer goes
up. We report a gain of at least $20\%$ for all queries where average
number of derivations added is less than $4$ but greater than
$2$. Note that for answers with just $2$ derivations, re-computation is
almost equivalent to the iterative step. Thus the gains ($10$-$35\%$) are
muted for queries with smaller averages ($< 2$). Queries with
higher average number of derivations have more modest gains
($8$-$11\%$).

\begin{table}[t]
	\tabcaption{Comparison of incremental update time and re-computation time (in $\mu$seconds) for \yago query answers in \name domain.}
		\centering
	%\resizebox{\columnwidth}{!}
	%{%
	{\small
		\begin{tabular}{l|l|ll|l}
			\toprule
			\bf QId  &  \multicolumn{1}{|p{2.5cm}|}{\bf Average number of derivations affected}  & \bf RecompTime  &  \bf IncrTime  & \bf Gain  \\ 
			\midrule
		$Q_{1}$ & $1.33$	 & $6.99 \times 10^2$ & $6.29 \times 10^2$ &	$0.101$	\\
			
		$Q_{3}$ & $1.37$	 & $2.95 \times 10^4$ & $2.11 \times 10^4$ &	$0.284$	\\ 
			
		$Q_{6}$ & $1.55$	 & $1.44 \times 10^4$ & $1.12 \times 10^4$ &	$0.226$	\\		
			\bottomrule
		\end{tabular}
	%}
	}
	\label{tab:insertyago}
\end{table}
%}

\begin{table}[t]
	\tabcaption{Comparison of incremental update time and re-computation time (in $\mu$seconds) for \gmark query answers in \name domain.}
	\centering
	%\resizebox{\columnwidth}{!}
	%{%
	{\small
		\begin{tabular}{l|l|ll|l}
			\toprule
			\bf QId  &  \multicolumn{1}{|p{2.5cm}|}{\bf Average number of derivations affected}  & \bf RecompTime  &  \bf IncrTime  & \bf Gain  \\ 	
					\midrule
			
	$Q_{4}$ & $6.815$ & $6.23\times 10^5$ & $5.62\times 10^5$ & $0.097$ \\
		
	$Q_{6}$ & $2.015$	&$9.79\times 10^4$ & $7.00\times 10^4$ & $0.285$ \\
		
	$Q_{9}$ & $1.530$ & $6.98\times 10^3$ & $5.46\times 10^3$ & $0.217$ \\
		
	$Q_{21}$ & $6.848$ & $1.10 \times 10^5$ & $9.70\times 10^4$ & $0.118$ \\
		
	$Q_{23}$ & $1.572$ & $3.80\times 10^4$ & $2.50\times 10^4$ & $0.342$  \\
		
	$Q_{32}$ & $1.000$  &$1.48\times 10^3$ & $1.30\times 10^3$ & $0.122$  \\
		
	$Q_{35}$ & $1.433$ & $7.12\times 10^2$ & $5.95\times 10^2$ & $0.164$ \\
		
	$Q_{38}$ & $1.320$ & $3.29\times 10^5$ & $2.27\times 10^5$ & $0.310$  \\
		
	$Q_{46}$ & $2.633$ &$5.28\times 10^5$ & $3.28\times 10^5$ & $0.379$ \\
		
	$Q_{54}$ & $2.048$ &$1.67\times 10^4$ & $1.27\times 10^4$ & $0.239$  \\
		
	$Q_{90}$ & $4.267$ & $1.01\times 10^4$ & $9.31\times 10^3$ & $0.078$ \\

			\bottomrule						
		\end{tabular}
	%}
	}
	\label{tab:insertgmark}
	%\vspace{-6mm}
\end{table}

\section{Conclusions}
\label{sec:concl}

In this paper we have proposed a novel commutative semiring which
enables us to symbolically compute probability of query answers over
probabilistic knowledge graphs. Further, we present a framework \name
that uses the proposed semiring to support query processing and answer
probability maintenance over probabilistic KG.  We have compared the
efficiency of our proposed probability computation technique against
two standard approaches used for probabilistic inference. %possible world computation and knowledge compilation. 
\name outperforms current
systems in a range of queries answer parameters containing almost
$70\%$ of the query answers. We have also shown that an
\emph{adaptive} approach that uses \name in conjunction with a
knowledge compilation based technique for large query answers is a
significantly faster alternative for probabilistic inference.

\comment{

We have proposed a novel approach for probabilistic inference on
Knowledge Graphs. This is backed by a robust theoretical
foundation. Our framework comprises of maintainable symbolic
computations of probabilities of query answers in the semiring of
\emph{flat} integer polynomials. This computation method far
outperforms current systems in a range of queries answer parameters
containing almost $70\%$ of the query answers we processed. We have
made a case for the use of this method in conjunction with a
compilation based technique for large query answers as the optimal way
for probabilistic inference.
}

%In future, we would like to explore bigger and real-world query logs over
%larger datasets to analyze the performance of \name over more query parameters
%including query shape.
%We would also like to handle a richer set of queries.

\balance
\bibliographystyle{ACM-Reference-Format}
\bibliography{papers}
\clearpage

\appendix
\section{Proof of Theorem 1}\label{appendix1}
The following facts about the function $flat$ are easy to prove and
will be used later.

\begin{lemma} \label{lemma:flatfacts}
  For integer polynomials $f$ and $g$,
  \begin{align*}
    \overline{\overline{f}} &= \overline{f} \\
    \overline{f + g} &= \overline{f} + \overline{g} \\
    \overline{\overline{f}g} &= \overline{fg}
  \end{align*}
\end{lemma}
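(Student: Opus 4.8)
The plan is to reduce all three identities to the case of single monomials by exploiting additivity, and then verify each by direct inspection of exponents. The central observation is that $flat$ is additive by its very definition: writing any integer polynomial in its canonical form as a sum of monomials, the rule $\overline{f+g} = \overline{f} + \overline{g}$ lets us distribute $flat$ over each monomial term individually. This already establishes the second identity, which is nothing more than a restatement of the defining recursion; I would simply note it and move on.

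For the first identity, $\overline{\overline{f}} = \overline{f}$, I would write $f = \sum_j c_j M_j$ with each $M_j$ a monomial and use additivity to reduce to showing $\overline{\overline{M}} = \overline{M}$ for a single monomial $M$. Since $\overline{M}$ is a product of distinct variables, all of its exponents are already equal to $1$, so a second application of $flat$ (which only reduces exponents exceeding $1$) leaves it unchanged. Reassembling via additivity then gives the full claim.

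The third identity, $\overline{\overline{f}g} = \overline{fg}$, is the substantive one and will be the main effort. Here I would expand $f = \sum_j c_j M_j$ and $g = \sum_k d_k N_k$ into monomials and use both additivity of $flat$ and distributivity of multiplication to reduce to the single-monomial statement $\overline{\overline{M}\,N} = \overline{M N}$. For monomials $M = \prod_i p_i^{a_i}$ and $N = \prod_i p_i^{b_i}$ with $a_i, b_i \ge 0$, flattening only records which variables are present: $\overline{M}$ carries exponent $[a_i>0]$ in $p_i$, so $\overline{M}\,N$ carries exponent $[a_i>0]+b_i$, whereas $MN$ carries exponent $a_i+b_i$ (writing $[\,\cdot\,]$ for the indicator). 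The key point is that these two exponents are positive for exactly the same indices, because for nonnegative integers $[a_i>0]+b_i > 0 \iff a_i+b_i>0$. Hence both $\overline{\overline{M}\,N}$ and $\overline{MN}$ equal the same squarefree monomial with coefficient $c_j d_k$, and summing over $j,k$ recovers the identity.

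The only delicate point to watch is that the monomial reduction remains valid even when distinct monomials collapse onto the same flat monomial after flattening; this is handled uniformly by the additivity rule, which automatically collects the coefficients of coinciding flat monomials. No further case analysis is required, so I expect the support-matching equivalence $[a_i>0]+b_i>0 \iff a_i+b_i>0$ in the third identity to be the crux, with everything else following mechanically.
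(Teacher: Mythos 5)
Your proof is correct. There is, however, nothing in the paper to compare it against: the paper states this lemma without proof (``easy to prove'') and uses it as a black box in the proof of Theorem~1, so your argument supplies exactly the missing justification. The route you take---using additivity of $flat$ (which is literally its defining recursion, as you note for the second identity) to reduce everything to single monomials, then checking exponents directly---is the natural one. You correctly isolate the only nontrivial point: for monomials $M=\prod_i p_i^{a_i}$ and $N=\prod_i p_i^{b_i}$, the exponent of $p_i$ in $\overline{M}\,N$ is $[a_i>0]+b_i$ while in $MN$ it is $a_i+b_i$, and these agree after flattening because $[a_i>0]+b_i>0 \iff a_i+b_i>0$ for nonnegative integers. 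Your closing remark is also the right one to make the reduction airtight: since $flat$ is linear over the monomial basis, collapsing of distinct monomials onto the same squarefree monomial is handled automatically by collecting coefficients, so no case analysis is needed.
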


We now prove our main theorem
\begin{theorem}\label{thm:semiring}
  $(Z_F[p_1, p_2, \ldots ,p_n], \oplus, \otimes, 0,1)$ is a commutative semiring.
\end{theorem}

\begin{proof}
  That $(Z_F[p_1, p_2, \ldots ,p_n], \otimes)$ is a commutative monoid
  with identity $1$ follows directly from the definitions.

  The operator $\oplus$ is also clearly commutative and $0$ is an
  identity for it. To show that $\oplus$ is associative, consider $f,
   g, h \in Z_F[p_1, p_2, \ldots ,p_n]$
  \begin{align*}
    (f \oplus g) \oplus h &= (f + g - \overline{fg}) \oplus h\\
    &= f + g - \overline{fg} + h \\
    &\phantomrel{=}- \overline{fh + gh - h\times flat(fg}) \\
    &= f + g + h - \overline{fg + fh + gh} \\
    &\phantomrel{=} + \overline{h\times \overline{fg}} \\
    &= f + g + h - \overline{fg + fh + gh} + \overline{fgh} \\
    &= (g \oplus h) \oplus f \tag{by symmetry} \\
    &= f \oplus (g \oplus h)
  \end{align*}
  This equality chain is just a repeated application of Lemma
  \ref{lemma:flatfacts}.

Since $0\otimes f = f \otimes 0 = 0$ follows directly by definition,
we are only left to show distributivity of $\otimes$ over $\oplus$. We
prove a quick lemma before we prove distributivity.

\begin{lemma} For all $f \in Z_F[p_1, p_2, \ldots ,p_n]$,
  $$f\otimes f = f$$
\end{lemma}
\begin{proof}
  Given the definition of $Z_F[p_1, p_2, \ldots ,p_n]$ we prove this
  by structural induction.

  By definition of the $\otimes$ operator, $1\otimes 1 = 1$ and $p_i
  \otimes p_i = p_i$ for all $0 \le i \le n$

  It is enough to show that if $g, h \in Z_F[p_1, p_2, \ldots ,p_n]$
  are such that $g \otimes g = g$ and $h \otimes h = h$ then
  $$(g \otimes h) \otimes (g \otimes h) = g \otimes h$$
  and
  $$(g \oplus h) \otimes (g \oplus h) = g \oplus h$$

  Liberally rewriting using Lemma \ref{lemma:flatfacts} and using $g
  \otimes g = \overline{g^2} = g$ and $h \otimes h = \overline{h^2} =
  h$ from our hypothesis, we get
  
  \begin{align*}
    (g \otimes h) \otimes (g \otimes h) &= \overline{\overline{gh}\times \overline{gh}} \\
    &= \overline{g^2 h^2} \\
    &= \overline{\overline{g^2} \times \overline{h^2}} \\
    &= \overline{gh} = g \otimes h
  \end{align*}
  and
  \begin{align*}
    (g \oplus h) \otimes (g \oplus h) &= (g + h - \overline{gh})\otimes (g + h - \overline{gh}) \\
    &= \overline{g^2 + h^2 + 2gh - 2g\overline{gh} - 2h\overline{gh} + (\overline{gh})^2} \\
    &= \overline{g^2} + \overline{h^2} + 2\overline{gh} - 2\overline{g\overline{gh}} - 2\overline{h\overline{gh}} + \overline{\overline{gh}^2} \\
    &= g + h + 2\overline{gh} - 2\overline{g^2h} - 2\overline{gh^2} + \overline{\overline{g^2h^2}} \\
    &= g + h + 2\overline{gh} - 2\overline{\overline{g^2}h} - 2\overline{g\overline{h^2}} + \overline{\overline{g^2} \times \overline{h^2}} \\
    &= g + h + 2\overline{gh} - 2\overline{gh} - 2\overline{gh} + \overline{gh} \\
    &= g + h - \overline{gh} \\
    &= g \oplus h
  \end{align*}
\end{proof}

To wrap up our proof of theorem~\ref{thm:semiring}, we now prove left
distributivity of $\otimes$ over $\oplus$. For $f, g, h \in Z_F[p_1,
  p_2, \ldots ,p_n]$
\begin{align*}
  f \otimes (g \oplus h) &= \overline{f(g + h - \overline{gh})} \\
  &= \overline{fg} + \overline{fh} - \overline{f\overline{gh}} \\
  &= \overline{fg} + \overline{fh} - \overline{\overline{f^2} \times \overline{gh}} \\
  &= \overline{fg} + \overline{fh} - \overline{fg \times fh} \\
  &= \overline{fg} + \overline{fh} - \overline{\overline{fg} \times \overline{fh}} \\
  &= \overline{fg} \oplus \overline{fh} = (f \otimes g) \oplus (f \otimes h) 
\end{align*} \qed
  
\end{proof}

\end{document}